\documentclass{eptcs}
\usepackage[utf8]{inputenc}
 
\usepackage{xargs}
\usepackage[capitalise]{cleveref}

\usepackage[nomargin,multiuser,inline]{fixme} 
\fxusetheme{color}
\FXRegisterAuthor{rg}{arg}{\color{blue}   {\underline{rG}}}
\FXRegisterAuthor{eM}{aeM}{\color{orange} {\underline{eM}}}

\usepackage{mathtools}
\usepackage{amsmath}
\usepackage{amsthm}
\newtheorem*{remark}{Remark}

\usepackage{amssymb}
\usepackage{caption}                
\usepackage{cmll} 
\usepackage{colonequals}            
\usepackage{environ}                
\usepackage{etex,etoolbox}          
\usepackage{graphicx}               


\usepackage[protrusion=true,expansion=true]{microtype} 
\usepackage{multirow}               
\usepackage{nicefrac}
\usepackage{proof}                  
\usepackage{bussproofs}
\EnableBpAbbreviations

\usepackage{rotating}               
\usepackage{subcaption}             
\usepackage{thmtools, thm-restate, thm-autoref}  
\usepackage{xifthen}                
\usepackage{appendix}               

\usepackage{amsbsy} 
\usepackage{stmaryrd}
 \usepackage{pdfsync}

\usepackage[usenames,dvipsnames]{xcolor} 
\usepackage{tikz}                   
\usetikzlibrary{arrows,shapes,automata,positioning,decorations.markings}

\usepackage{fixltx2e} 
\usepackage{xspace} 

\usepackage[nomargin,inline]{fixme} 
\fxusetheme{color}

\usepackage{mathtools} 

\hypersetup{final} 




\def\colorPtp{\color{black}}
\def\colorSbj{\color{black}}
\def\colorAct{\color{black}}
\def\colorNode{\color{black}}
\def\colorFun{\color{black}}
\def\colorR{\color{black}}
\def\colorE{\color{black}}

\newcommand{\quo}[1]{\lq\lq {#1}\rq\rq}


\newcommand{\msg}[1][m]{\mathsf{#1}}
\newcommand{\msgset}{\mathsf{M}}
\newcommand{\chset}{\mathsf{C}}
\newcommand{\eset}{\mathsf{E}}
\newcommand{\gset}{\mathcal{G}}
\newcommand{\ptp}[1][A]{\ensuremath{\mathsf{#1}}}

\newcommand{\subject}[1][\ae]{\mathsf{\colorSbj{sbj(#1)}}}

\newcommandx{\common}[3][1=\ptp,2={\aR},3={\aR'},usedefault=@]{f_{#1}}
\newcommandx{\opair}[2][1={\ae},2={\ae'},usedefault=@]{\langle {#1},{#2} \rangle}
\newcommandx{\hopair}[2][1={\aE},2={\aE'},usedefault=@]{\llparenthesis\, {#1},{#2}\, \rrparenthesis}
\newcommandx{\nmerge}[2][1={i},2={},usedefault=@]{\mu_{#2}\ifempty{#1}{}{(\gname[{#1}])}}
\newcommandx{\wb}[2][1={\aG},2={\aG'},usedefault=@]{wb({#1}, {#2})}
\newcommandx{\widx}[2][1={\aW},2={i},usedefault=@]{{#1}[{#2}]}
\newcommandx{\outop}[2][1=\gname,2={}]{!^{{#1}{#2}}}
\newcommandx{\inop}[2][1=\gname,2={}]{?^{{#1}{#2}}}
\newcommandx{\aout}[5][1=\p,2={\q},3=\gname,4=\msg,5={},usedefault=@]{
  {#1}{#5}{#2}{#5}\outop[{{#3}{#5}}]{#4}{#5}
}
\newcommandx{\ain}[5][1=\p,2={\q},3=\gname,4=\msg,5={},usedefault=@]{
  {#1}{#5} {#2}{#5} \inop[{{#3}{#5}}] {#4}{#5}
}
\newcommandx{\adep}[1][1={}]{
  \langle \aout[@][@][@][@][{#1}], \ain[@][@][@][@][{#1}] \rangle
}

\newcommandx{\hproj}[2][1=\aR, 2=\ptp, usedefault=@]{
  \ifempty{#1}{}{{#1}}\ifempty{#2}{}{{^{\tiny @{#2}}}}
}
\newcommandx{\eproj}[2][1=\aE,2=\ptp, usedefault=@]{
  {{#1}}\ifempty{#2}{}{{^{\tiny @{#2}}}}
}



\newcommand{\aM}{M}
\newcommandx{\cm}[2][1=\ptp, 2=\aM]{{#2}_{#1}}
\newcommand{\p}{\ptp}
\newcommand{\q}{{\ptp[B]}}
\newcommandx{\achan}[2][1=A,2=B,usedefault=@]{\ptp[#1]\ptp[#2]}

\newcommand{\ptpset}{\mathsf{\colorPtp{P}}}

\newcommand{\pset}{\ptpset}

\newcommand{\tset}{\to}

\newcommand{\csconf}[2]{\conf{\vec{#1} \ ; \ \vec{#2}}}
\newcommand{\conf}[1]{\langle #1 \rangle}
\newcommand{\chanset}{\chset}
\newcommand{\TRANSS}[1]{{\xRightarrow{\raisebox{-.3ex}[0pt][0pt]{\scriptsize $#1$} }}}
\newcommand{\RS}{\mathsf{RS}}
\newcommand{\NUL}{\varepsilon}

\newcommand{\trans}[2][{}]{\,\xrightarrow{#2}_{#1}\,}


\newcommand{\aG}{\mathsf{G}}
\newcommand{\gsink}{\circledcirc}
\newcommand{\gsource}{\circ}
\newcommand{\gvertex}{\bullet}
\newcommand{\gseqop}{;}
\newcommand{\gparop}{|}
\newcommand{\gchoop}{+}
\newcommand{\gname}[1][i]{{\colorNode{\scriptsize\textsf{#1}}}}

\newcommandx{\gnode}[2][1=\gname,2=\gint,usedefault=@]{
  {\ifempty{#1}{}{\colorNode{#1\colon}}} {#2}
}
\newcommand{\nodeset}{\mathsf{\colorNode{K}}}
\newcommand{\gempty}{\mathbf{0}}
\newcommandx{\gint}[4][1=\gname,2=\ptp,3=\msg,4={\ptp[B]},usedefault=@]{
  \gnode[{#1}][{#2} \xrightarrow{\msg[{#3}]} {#4}]
}
\newcommandx{\gseq}[3][1=\gname,2={\aG},3={\aG'},usedefault=@]{
  \gnode[#1][{#2} \gseqop {#3}]
}
\newcommandx{\gpar}[3][1=\gname,2={\aG},3={\aG'},usedefault=@]{
  \gnode[#1][\ifempty{#1}{{#2} \gparop {#3}}{\!\!({#2} \gparop {#3})}]}
\newcommandx{\gcho}[3][1=\gname,2={\aG},3={\aG'},usedefault=@]{
  \gnode[#1][\ifempty{#1}{{#2} \gchoop {#3}}{\!\!({#2} \gchoop {#3})}]
}

\newcommandx{\gsem}[2][1={\aG},2={},usedefault=@]{[\![ {#1} ]\!]_{#2}}
\newcommandx{\rbot}{\perp}
\newcommandx{\rtrs}[1][1={\aR},usedefault=@]{{#1}^{\star}}
\newcommandx{\gord}[1][1={\aG},usedefault=@]{<_{#1}}
\newcommandx{\gordeq}[1][1={\aG},usedefault=@]{\leq_{#1}}


\newcommandx{\rlang}{\mathcal{L}}
\newcommandx{\efst}[2]{\textsf{cs}_{#1}\ifempty{#2}{}{({#2})}}
\newcommand{\enode}[1]{\colorAct{\textsf{cp}}\ifempty{#1}{}{({#1})}}
\newcommand{\eact}[1]{{\colorAct{\textsf{act}}}\ifempty{#1}{}{({#1})}}
\newcommandx{\aW}{w}

\newcommand{\gfun}[1]{\ensuremath{\mathsf{\colorFun #1}}}
\newcommand{\rmax}[1][\aR]{\gfun{max}\,{#1}}
\newcommand{\rmin}[1][\aR]{\gfun{min}\,{#1}}
\newcommand{\rMAX}[1][\aR]{\gfun{lst}\,{#1}}
\newcommand{\rMIN}[1][\aR]{\gfun{fst}\,{#1}}

\newcommandx{\rseq}[2][1=\aG,2={\aG'},usedefault=@]{\gfun{seq}({#1},{#2})}

\newcommand{\aR}[1][R]{{\colorR{#1}}}
\renewcommand{\ae}[1][e]{{\colorE{#1}}}
\newcommand{\qst}{\, : \, }

\newcommandx{\gproj}[2][1=\aG,2=\ptp]{{#1}\downarrow_{#2}}
\newcommandx{\cinit}[1][1={q_0},usedefault=@]{{#1}}
\newcommandx{\cfinal}[1][1={q_e},usedefault=@]{{#1}}
\newcommand{\ctr}[3]{
  \tikz{
    \node at (0,0) (i) {};
    \node[shape=circle,draw,inner sep=1pt] at (0.7,0) (s) {$#1$};
    \node[shape=circle,draw,inner sep=1pt, right=of s] at (1.5,0)  (t) {$#2$};
    \node at (3.1,0) (f) {};
    \draw[->] (i) -- (s);
    \draw[->] (s) -- (t) node [midway,above]{$#3$};
    \draw[->] (t) -- (f);
  }
}
\newcommand{\ccup}{\ \cup\ }
\newcommand{\ccap}{\ \cap\ }
\newcommand{\cprod}{\ \times\ }
\newcommandx{\geproj}[4][1=\aG,2=\ptp,3=\cinit,4=\cfinal,usedefault=@]{
  {#1}\downarrow_{#2}^{{#3},{#4}}
}

\newcommand*{\StrikeThruDistance}{0.15cm}%
\tikzset{strike thru arrow/.style={
    decoration={markings, mark=at position 0.5 with {
        \draw [blue, thick,-] 
            ++ (-\StrikeThruDistance,-\StrikeThruDistance) 
            -- ( \StrikeThruDistance, \StrikeThruDistance);}
    },
    postaction={decorate},
}}

\newcommandx{\ich}[1][1={\aG},usedefault=@]{{#1}^{\oplus}}
\newcommandx{\ichedges}[2][1={\aG},2={\gname},usedefault=@]{{#1}^{\oplus}({#2})}
\newcommandx{\parts}[1]{2^{#1}}
\newcommandx{\actch}{c}
\newcommandx{\soundactch}[2][1={\aG},2={\actch},usedefault=@]{{#1} \,\circledR\, {#2}}
\newcommandx{\rOnActch}[2][1={\aG},2={\actch},usedefault=@]{{#1} \setminus {#2}}
\newcommandx{\rOnActchClean}[2][1={\aG},2={\actch},usedefault=@]{{#1} \circledR {#2}}
\newcommandx{\rAllEvents}[1][1={\aG},usedefault=@]{\mathit{dom}(#1)}
\newcommand{\aE}{{\tilde \ae}}
\newcommandx{\hyedge}[1]{\{#1\}}

\newcommandx{\rdiv}[2][1=\gcho,2=\ptp,usedefault=@]{
  \gfun{div}_{#2}(#1)
}

\newcommandx{\rrdiv}[5][1={\aG},2={\aG'},3={\aE},4={\aE'},5=\ptp,usedefault=@]{
  \gfun{div}^{#3,#4}_{#5}(#1,#2)
}

\newcommand{\aCM}{M}
\newcommand{\aQ}{Q}
\newcommandx{\aQzero}[1][1=,usedefault=@]{
  {\ifempty{#1}{q_0}{q_{q#1}}}
}
\newcommand{\aTrs}{\tset}
\newcommand{\aCS}{S}
\newcommand{\aConf}{s}

\newcommand{\gatedistancein}{3pt}
\newcommand{\gatedistanceinand}{2pt}

\tikzset{
  src/.style={draw,circle,fill=white,
    minimum size=2mm,
    inner sep=0pt},
  sink/.style={draw,circle,double,fill=white,
    minimum size=1.5mm,
    inner sep=0pt},
  node/.style={draw,circle,fill=black,
    minimum size=2mm,
    inner sep=0pt},
  %
  block/.style = {rectangle, draw=gray, align=center, fill=orange!25, rounded corners=0.1cm,
    minimum size=5mm, inner sep=2pt},
  prenode/.style = {minimum size=9pt,inner sep=2pt, font=\Large},
  bblock/.style = {rectangle, draw=blue!50, opacity=.5, line width=1pt, align=center, fill=white, rounded corners=0.1cm,
    minimum size=7mm, inner sep=2pt},
  prenode/.style = {minimum size=9pt,inner sep=2pt, font=\Large},
  agate/.style={draw, rectangle,
    minimum size=3mm,
    inner sep=0pt,
    fill=orange!25,
    postaction={path picture={%
        \draw[red]
        ([yshift=\gatedistanceinand]path picture bounding box.south) --
        ([yshift=-\gatedistanceinand]path picture bounding box.north) ;}}},
  ogate/.style = {
    diamond, draw, fill=orange!25,
    minimum size=4mm,
    inner sep=0pt,
    postaction={path picture={%
        \draw[red]
        ([yshift=\gatedistancein]path picture bounding box.south) -- ([yshift=-\gatedistancein]path picture bounding box.north)
        ([xshift=-\gatedistancein]path picture bounding box.east) -- ([xshift=\gatedistancein]path picture bounding box.west)
        ;}}},
  %
  anygate/.style = {circle, draw, fill=white,
    minimum size=4mm,
    inner sep=0pt,
    postaction={path picture={%
        \draw[black]
        ([xshift=-\gatedistancein,yshift=\gatedistancein]path picture bounding box.south east) --
        ([xshift=\gatedistancein,yshift=-\gatedistancein]path picture bounding box.north west)
        ([xshift=-\gatedistancein,yshift=-\gatedistancein]path picture bounding box.north east) --
        ([xshift=\gatedistancein,yshift=\gatedistancein]path picture bounding box.south west)
        ;}}},
  %
  elli/.style = {draw,densely dotted,-},
  %
  line/.style = {draw,->, rounded corners=0.07cm,>=latex},
  nline/.style = {draw,semithick, ->},
  pline/.style = {draw,->,>=latex},
  node distance=1cm and 0.7cm,
  baseline=(current  bounding  box.center),
}


\newcommand{\ifempty}[3]{%
  \ifthenelse{\isempty{#1}}{#2}{#3}%
}




\newcommand{\bnfdef}{\ ::=\ }
\newcommand{\bnfmid}{\ \big|\ }


\newcommand{\hidden}[1]{}

\newcommand{\st}{\ \big|\ }

\def\vec{\mathaccent"017E }
\renewcommand{\vec}[1]{\ensuremath{\textit{\textbf{#1}}}}





\let\greekgamma\gamma
\def\contrColor{\color{Plum}}
\newcommand{\contrFmt}[1]{{\contrColor{#1}}}

\newcommand{\codefont}{\fontsize{10}{10}\selectfont}
\newcommand{\code}[1]{{\tt\codefont {#1}}}









\renewcommand{\gamma}[1][]{\mathord{\contrFmt{\greekgamma}_{\contrFmt{#1}}}}




\def\cocoColor{\color{MidnightBlue}}
\newcommand{\cocoFmt}[1]{{\cocoColor{\code{#1}}}}



\let\greektau\tau
\renewcommand{\tau}{\cocoFmt{\greektau}}

\NewEnviron{restheorem}[2][]
  {\begin{restatable}[#1]{thm}{#2}%
    \label{#2}
    \BODY
   \end{restatable}%
   \begin{proof}%
     See~\vref{#2-proof}.%
   \end{proof}}
\NewEnviron{reslemma}[2][]
  {\begin{restatable}[#1]{lem}{#2}%
    \label{#2}
    \BODY
   \end{restatable}%
   \begin{proof}%
     See~\vref{#2-proof}.%
   \end{proof}}

%
  {}



\newtheorem{theorem}{Theorem}

\title{An Abstract Semantics of the Global View of Choreographies\thanks{The authors are grateful to the reviewers of ICE for the helpful comments and
discussions on the forum. This work has been partially supported by
COST Action IC1201 (Behavioural Types for Reliable Large-Scale Software Systems, BETTY).}}

\author{
  Roberto Guanciale
  \institute{KTH, Sweden}\\
  \email{robertog@kth.se}
  \and
  Emilio Tuosto
  \institute{University of Leicester, UK}\\
  \email{emilio@le.ac.uk}
}

\pagestyle{headings}

\begin{document}

\maketitle              

\begin{abstract}
  We introduce an abstract semantics of the global view of
  choreographies. Our semantics is given in terms of pre-orders and
  can accommodate different lower level semantics. We discuss the
  adequacy of our model by considering its relation with communicating
  machines, that we use to formalise the local view. Interestingly,
  our framework seems to be more expressive than others where
  semantics of global views have been considered. This will be
  illustrated by discussing some interesting examples.
\end{abstract}

\section{Introduction}\label{intro:sec}
\paragraph{The problem}
Choreographies have been advocated as a suitable methodology for the
design and analysis of distributed applications.
Roughly, a choreography describes how two of more distributed
components coordinate with each other.
Of course, in a distributed setting this coordination has to happen
through exchange of messages.
Among the possible interpretations of what choreographies are
(see~\cite{bdft16} for a discussion and references), we embrace the
one suggested by W3C's~\cite{w3c:cho}:
\begin{quote}
  Using the Web Services Choreography specification, a contract
  containing a global definition of the common ordering conditions and
  constraints under which messages are exchanged, is produced that
  describes, from a \textbf{global viewpoint} [...] observable
  behaviour [...]. Each party can then use the \textbf{global
    definition} to build and test solutions that conform to it. The
  global specification is in turn realised by combination of the
  resulting \textbf{local systems} [...]
\end{quote}
This description conceptualises two views, a \textbf{global} and a
\textbf{local} one, which enable the relations represented by the
following diagram:
\begin{equation}
  \label{eq:chor:dia}
  \begin{tikzpicture}[node distance=1cm and 1cm, every text node part/.style={align=center}]
    \node at (0,0) (g) [text width = 1cm] {Global view};
    \node at (5,0) (l) {Local \\view};
    \node at (10,0) (c) {Local \\systems};
    \draw[->] (g) -- (l) node [midway,above]{projection};
    \draw[->] (c) -- (l) node [midway,above]{comply};
  \end{tikzpicture}
\end{equation}
where \lq projection\rq\ is an operation producing the local view from the
global one and \lq comply\rq\ verifies that the behaviour of each
components adhere with the one of the corresponding local
view.
(The \lq projection\rq\ arrow in \eqref{eq:chor:dia} may have an
\quo{inverse} one (cf.~\cite{lty15}), but this is immaterial here.)
For diagram~\eqref{eq:chor:dia} to make sense, precise semantics should
be fixed for the global and the local views.
The semantics of the latter is well understood: it directly emanates
from the adopted communication model.
In fact, the local view details how communications take place.
For instance, in a channel-based communication model, the local view
may specify what is the behaviour of each component in terms of its
send/receive actions.

What is instead \quo{the semantics of the global view}?
We investigate such question here. And, after making it more precise,
we propose a new semantic framework for global views and discuss its
advantages on existing frameworks.

\paragraph{A view of global views}
Although intriguing, the W3C description above, is not very
enlightening to understand what a global view is; basically it says
that a global view has to describe the observable behaviour from a
global viewpoint...a bit too much circularity for a definition!

We will consider global views as high level descriptions of systems
abstracting away some aspects in order to offer a \emph{holistic}
understanding of the communication behaviour of distributed
systems.
(We beg for the reader's patience: this is still vague, but will
become precise in the forthcoming sections.)
In a global view, components are not taken anymore in \emph{isolation}.
Rather they are \emph{specified together}, while \emph{forgetting
  some details}.
For us, this will mean to describe the protocol of interaction of a
systems in a way that is oblivious of how messages are actually
exchanged in the communication.
For instance, in our example based on channels, the global view may
abstract away from send/receive actions and use \emph{interactions}
as the unit  of coordination~\cite{chy07}.

The idea depicted in diagram~\eqref{eq:chor:dia} is beautiful.
To our best knowledge, it has been firstly formally pursued
in~\cite{honda16jacm} and later followed by others.
The main reason that makes attractive diagram~\eqref{eq:chor:dia} is
the interplay between global and local artefacts%
\footnote{
  %
  We will use the term \lq artefact\rq\ when referring to actual
  specifications embodying the global/local views.
  Such embodiments may assume various forms: types~\cite{honda16jacm},
  programs~\cite{ivan}, graphs and automata~\cite{lty15,dy12}, executable
  models~\cite{w3c:cho,bpel}, etc.
  Typically, the literature uses the (overloaded) word \lq model\rq\
  to refer to this flora of embodiments.
  We prefer the word \lq artefact\rq\ because it allows us to refer to
  different contexts and different abstraction levels without
  attaching yet another meaning to \lq model \rq.
}
as it fosters some of the best principles of computer science:
\begin{description}
\item[Separation of concerns] The \emph{intrinsic logic} of the
  distributed coordination is expressed in and analysed on global
  artefacts, while the local artefacts refine such logic at lower
  levels of abstraction.
\item[Modular software development life-cycle] The W3C description
  above yields a distinctive element of choreographies which makes
  them appealing (also to practitioners).
  Choreographies allow independent development: components can
  harmoniously interact if they are proven to comply with the local
  view.
  Global and local views yield the \quo{blueprints} of systems as a
  whole and of each component, respectively.
\item[Principled design] A choreographic framework orbits around the
  following implication:
\begin{center}
  \textbf{if} \emph{cond}(global artefact) \textbf{then}
  \emph{behave}(\emph{projection}(global artefact))
\end{center}
that is, proving that a correctness condition \emph{cond} holds on an
abstraction (the global artefacts) guarantees that the system is well
behaved, provided that the local artefacts are \quo{compiled} from the
global ones via a \emph{projection} operation that preserves
behaviour.
\end{description}

Therefore, providing good semantics for global artefacts is
worthwhile: it gives precise algorithms and establishes precise
relations between specifications of distributed systems (the global
artefacts) and their refinements (the local artefacts).
\\\noindent
\textbf{Outline \& Contributions}
We explain the advantages of defining an abstract semantics of global
views in \cref{mot:sec} and we give the syntax of our language of
global artefacts in \cref{cho:sec}.
\cref{hyper:sec} is a technical prelude; it introduces the  notion
of \emph{reflection}, which is crucial for our generalisation.
\cref{sem:sec} yields another contribution: our abstract semantics of
global artefacts. A first technical advantage of our semantics is
provided by the definition of \emph{well-branched} choices, explained
through some the illustrative examples of \cref{sem:sec}.
Our semantics is used in \cref{lang:sec} to identify all licit traces
of a choreography, thus making it possible to precisely characterise
the behaviour expected by the specification.
\cref{proj:sec} first recalls the communicating finite state machines
(that are used to formalise the local behaviours) and then defines the
projection of global artefacts on communicating machines.
The main technical results establish that well-branched choreographies
are deadlock free (\cref{thm:deadlock}) and that the executions specified
by the global view contain those of its projections (\cref{thm:lang})
operation and shows that the local behaviours comply with the ones of
the global specification.  Concluding remarks are in \cref{conc:sec}.




\section{Why going abstract?}\label{mot:sec}
As said, many authors have adopted the idea in diagram
\eqref{eq:chor:dia} and several semantics of (models of) global views
have been introduced.
We distinguish two broad classes.
\begin{remark}
  We mention a tiny portion of the literature in way of example; no
  claim of exhaustiveness.
\end{remark}

The largest class is possibly the one that includes the seminal work
on global types~\cite{honda16jacm}.
The idea is that the semantics of global artefacts (embodied by
global types in~\cite{honda16jacm}) is given in terms of the semantics
of their local artefacts via a suitable projection operation.
In the case of global types, the projection yields local types,
that are process algebras equipped with an operational semantics.
This approach is ubiquitous in the literature based on behavioural
types and it has also been adopted in~\cite{lty15} where global
artefacts are global graphs~\cite{dy12} and local artefacts are
communicating machines~\cite{bz83}.

In the other class, the semantics of global views is defined
explicitly.
For instance, in~\cite{DBLP:journals/corr/abs-1203-0780} an
operational semantics is defined while in~\cite{bmt14} a trace-based
semantics is given.
In both cases, the idea is to \quo{split} the interactions in the
global view into its constituent send/receive actions.
In this category we also put approaches like~\cite{ivan} where
global artefacts become \emph{global programs} with an operational
semantics.

The classes above contain perfectly reasonable approaches,
from a theoretical perspective.
After all, we just need a semantics for the global view; whatever
\quo{fits} with the semantics of the local view would do.
We argue however that making the semantics of the global view
a \emph{dependent variable} of the semantics of the local one
brings in some issues that we now briefly discuss.

Firstly, several (syntactic) restrictions are usually necessary
in order
to rule out choreographies that \quo{do not make sense}.
Such restrictions may be innocuous (as for instance the requirement
that the components involved in two sequentially consecutive
interactions cannot be disjoint), but they could also limit the
expressiveness of the language at hand (for instance, languages
featuring the parallel composition of global artefacts do not allow
components involved in more than one parallel thread).

Secondly, and more crucially, the semantics of global views proposed
so far appear to be \quo{too concrete}.
As a matter of fact, this spoils the beauty of the interplay between
global and local views.
All the semantics of the global view that we are aware of basically
mirror quite closely the one of the local view.
This means that to understand a global artefact one has to look
at (or think in terms of) the corresponding local artefacts.
This is not only difficult to do, but also undesirable.
For instance, designers have to know/fix low level details at early
stages of the development and cannot really compare different global
artefacts with each other without considering the local artefacts;
this makes it hard to e.g., take design decisions at the abstract
level.

So, what about giving a semantics of the global view
\emph{independently} of the one of the local view?
This is what we do here.
We define a new semantics of global views that makes very few
assumptions on how messages are exchanged at lower levels.
Conceptually this is easy to achieve.
We fix a specification language of global artefacts and we interpret a
specification as a set of \quo{minimal and natural} causal
dependencies among the messages.
We then define when a global artefact is sound, namely when its causal
dependencies are consistent so that they are amenable to be executed
distributively by some local artefacts, regardless of the underlying
message passing semantics.

We illustrate the advantages of our approach by adopting a rather
liberal language of global artefacts inspired by global
graphs~\cite{dy12}.
We then show the relation of such language on a local view featuring
local artefacts as communicating machines~\cite{bz83}.



\section{Global views as Graphs }\label{cho:sec}
Let $\ptpset$ be a set of \emph{participants} (ranged over by $\ptp$,
$\ptp[B]$, etc.), $\msgset$ a set of \emph{messages} (ranged over by
$\msg$, $\msg[x]$, etc.), and $\nodeset$ a set of \emph{control
  points} (ranged over by $\gname$, $\gname[j]$, etc.).
We take $\ptpset$, $\msgset$, and $\nodeset$ pairwise disjoint.
The participants of a choreography exchange messages to coordinate
with each other.
In the global view, this is modelled with
\emph{interactions}\footnote{ We depart from the usual notation
  $\gint[][@][][@]:\msg$ to a have a more lightweight syntax.  }
$\gint[{}]$, which represent the fact that participant $\ptp$ sends
message $\msg$ to participant $\ptp[B]$, which is expected to receive
$\msg$.
A \emph{global choreography} (g-choreography for short) is a term
$\aG$ derived by the following grammar (recursion is omitted for
simplicity as discussed in \cref{conc:sec})
\begin{equation}\label{eq:chor}
\aG \bnfdef \gempty \bnfmid \gint \bnfmid \gseq[{}] \bnfmid \gpar \bnfmid \gcho
\end{equation}
A g-choreography can be empty, a simple interaction, the sequential or
parallel composition of g-choreographies, or the choice between two
g-choreographies.
We implicitly assume $\ptp \neq \ptp[B]$ in interactions $\gint$.
In \eqref{eq:chor}, a \emph{control point} $\gname$ tags interaction,
choice, and parallel g-choreographies: we assume that in a
g-choreography $\aG$ any two control points occurring in different
positions are different, e.g., we cannot write
$\gpar[@][{\gint[{\gname[j]}]}][{\gint[@][{\ptp[C]}][{\msg[y]}][{\ptp[D]}]}]$.
Control points are a technical device (as we will see when defining
projections and semantics of g-choreographies) and they could be
avoided.%
\footnote{At the cost of adding technical complexity, one can
  automatically assign a unique identifier to such control points.}
Let $\gset$ be the set of g-choreographies and, for $\aG \in \gset$,
let $\enode{\aG}$ denote the set of control points in $\aG$.
Throughout the paper we may omit control points when immaterial, e.g.,
writing $\gcho[]$ instead of $\gcho$.
Finally, fix a function
$\nmerge[][] : \gset \to (\nodeset \to \nodeset)$ such that, for all
$\aG \in \gset$, $\nmerge[](\aG)$ (written $\nmerge[][\aG]$)
\begin{itemize}
\item is bijective when restricted to $\enode{\aG}$ and
\item for all $\gname \in \enode{\aG}$, $\nmerge[@][\aG] \not\in
  \enode{\aG}$.
\end{itemize}
As clear in \cref{sem:sec} (where we map g-choreographies on
hypergraphs), $\nmerge[]$ will be used to establish a bijective
relation between fork and merge control points corresponding to
choices (and, in \cref{hyper:sec}, for a bijective correspondence
between (control points of) complementary send/receive actions).
Finally, we take g-choreographies up to the structural congruence
relation induced by the following axioms:
\begin{itemize}
\item $\gcho[{}][\_][\_]$ and $\gpar[{}][\_][\_]$ form commutative
  monoids with respect to $\gempty$
\item $\gseq[{}][\_][\_]$ is associative, and $\gseq[{}][@][\gempty] = \aG$,
and $\gseq[{}][\gempty][\aG] = \aG$
\end{itemize}

The syntax in \eqref{eq:chor} captures the structure of a visual
language of directed acyclic graphs\footnote{Cycles are not considered
  for simplicity and can be easily added.} so that each
g-choreography $\aG$ can be represented as a rooted graph with a
single ``enter'' (``exit'') control point; that is $\aG$ has a distinguished
\emph{source} (resp. \emph{sink}) control point that can reach (resp. be
reached by) any other control point in $\aG$.
\begin{figure}[th]
  \centering
  $\begin{array}{c@{\qquad}c@{\qquad}c@{\qquad}c@{\qquad}c}
      \begin{tikzpicture}[node distance=1cm and 1cm, every node/.style={scale=.7,transform shape}]
        \node[src] at (0,0) (src) {}; \node[sink, below=of src] (sink)
        {}; \path[line] (src) -- (sink);
      \end{tikzpicture}
     &
     \begin{tikzpicture}[node distance=0.4cm and 0.4cm, every node/.style={scale=.7,transform shape}]
       \node[bblock] at (0,0) (g)
       {$\aG$}; \node[node, below=of
       g,label=left:$\gname$] (s1) {}; \node[bblock, below=of s1]
       (gp)
       {$\aG'$}; \path[line,dotted] (g) -- (s1); \path[line,dotted]
       (s1) -- (gp);
     \end{tikzpicture}
     &
     \begin{tikzpicture}[node distance=0.4cm and 0.4cm, every node/.style={scale=.7,transform shape}]
       \node[src] at (0,2) (src) {};
       \node[block,label=right:$\gname$] at (0,1) (int)
       {$\gint[{}]$}; \node[sink] at (0,0) (sink) {};
       \path[line] (src) -- (int);
       \path[line] (int) -- (sink);
     \end{tikzpicture}
     &
     \begin{tikzpicture}[node distance=0.6cm and 0.6cm, every node/.style={scale=.7,transform shape}]
       \node[src] at (0,0) (src) {}; \node[agate,below=of
       src,label=below:$\gname$] (par) {}; \node[node, left=of par]
       (s1) {}; \node[node, right=of par] (s2) {}; \node[bblock,
       below=of s1] (g)
       {$\aG$}; \node[bblock, below=of s2] (gp)
       {$\aG'$}; \node[node, below=of g] (sg) {}; \node[node,
       below=of gp] (sgp) {}; \node[agate, right=of
       sg,label=above:$\nmerge$] (join) {}; \node[sink, below=of
       join] (sink) {}; \path[line] (src) -- (par); \path[line]
       (par) -- (s1); \path[line] (par) -- (s2); \path[line,dotted]
       (s1) -- (g); \path[line,dotted] (s2) -- (gp);
       \path[line,dotted] (g) -- (sg); \path[line,dotted] (gp) --
       (sgp); \path[line] (sg) -- (join); \path[line] (sgp) --
       (join); \path[line] (join) -- (sink);
     \end{tikzpicture}
     &
     \begin{tikzpicture}[node distance=0.6cm and 0.6cm, every node/.style={scale=.7,transform shape}]
       \node[src] at (0,0) (src) {}; \node[ogate,below=of
       src,label=below:$\gname$] (par) {}; \node[node, left=of par]
       (s1) {}; \node[node, right=of par] (s2) {}; \node[bblock,
       below=of s1] (g)
       {$\aG$}; \node[bblock, below=of s2] (gp)
       {$\aG'$}; \node[node, below=of g] (sg) {}; \node[node,
       below=of gp] (sgp) {}; \node[ogate, right=of
       sg,label=above:$\nmerge$] (join) {}; \node[sink, below=of
       join] (sink) {}; \path[line] (src) -- (par); \path[line]
       (par) -- (s1); \path[line] (par) -- (s2); \path[line,dotted]
       (s1) -- (g); \path[line,dotted] (s2) -- (gp);
       \path[line,dotted] (g) -- (sg); \path[line,dotted] (gp) --
       (sgp); \path[line,dotted] (sg) -- (join); \path[line,dotted]
       (sgp) -- (join); \path[line] (join) -- (sink);
     \end{tikzpicture}
     \\
     \text{empty graph}
     &
     \text{sequential}
     &
     \text{interaction}
     &
     \text{parallel}
     &
     \text{branching}
   \end{array}$
   \caption{Our graphs: $\gsource$ is the source node, $\gsink$ the
     sink one; other nodes are drawn as $\gvertex$}
   \label{fig:graphs}
\end{figure}
\Cref{fig:graphs} illustrates this; a dotted edge from/to a
$\gvertex$-control points single out the source/sink control point the
edge connects to.
For instance, in the graph for the sequential composition, the
top-most edge identifies $\aG$ sink node and the other edge identifies
the source node of $\aG'$; intuitively, $\gvertex$ is the control point
of the sequential composition of $\aG$ and $\aG'$ obtained by \lq\lq
coalescing\rq\rq\ the sink control point of $\aG$ with the source
control point of $\aG'$.
%
In a graph $\aG \in \gset$, to each node $\gname$ of a branch/fork
corresponds the node $\nmerge[\gname][\aG]$ of its control point.
Labels will not be depicted when immaterial.
Our graphs resemble the global graphs of~\cite{dy12,lty15} the only
differences being that
\begin{itemize}
\item by construction, forking and branching control points $\gname$
  have a corresponding join and merge control point $\nmerge$;
\item there is a unique sink control point with a unique incoming edge (as
  in~\cite{dy12,lty15}, there is also a unique source control point with a
  unique outgoing edge).
\end{itemize}

As an example, consider the graph (where the control points of
interactions are omitted for readability)

\[\begin{tikzpicture}[node distance=.6cm and 1cm, every node/.style={scale=.7,transform shape}]
  \node[src] at (0,0) (src) {};
  \node[agate,below=of src,label=below:$\gname$] (par) {};
  \node[node, left=of par] (s1) {};
  \node[node, right=of par] (s2) {};
  \node[bblock, below=of s1] (g) {$\gint[]$};
  \node[bblock, below=of s2] (gp) {$\gint[][@][n]$};
  \node[node, below=of g] (sg) {};
  \node[node, below=of gp] (sgp) {};
  \node[agate, right=of sg,label=above:$\nmerge$] (join) {};
  \node[sink, below=of join] (sink) {};
  \path[line] (src) -- (par);
  \path[line] (par) -- (s1);
  \path[line] (par) -- (s2);
  \path[line] (s1) -- (g);
  \path[line] (s2) -- (gp);
  \path[line] (g) -- (sg);
  \path[line] (gp) -- (sgp);
  \path[line] (sg) -- (join);
  \path[line] (sgp) -- (join);
  \path[line] (join) -- (sink);
\end{tikzpicture}
\]
representing a choreography where $\ptp$ sends $\ptp[B]$ messages
$\msg$ and $\msg[n]$ in any order.



\section{Hypergraphs of events}\label{hyper:sec}
The semantics of a choice-free g-choreography $\aG \in \gset$ (i.e. a
choreography that does not contain $\gcho[{}][\_][\_]$ terms) is a
partial order, which represents the causal dependencies of the
communication actions specified by $\aG$.
Choices are a bit more tricky.
Intuitively, the semantics of $\gcho$ consists of two partial
orders, one representing the causal dependencies of the
communication actions of $\aG$ and the other of those of $\aG'$.
In the following, we will use hypergraphs as a compact representations
of sets of partial orders.

Actions \emph{happen on channels}, which we identify by the names of
the participants involved in the communication.
Formally, a channel is an element of the set
$\chset = \pset^2 \setminus \{(\p,\p) \st \p \in \pset\}$ and we
abbreviate $(\p,\q) \in \chset$ as $\achan$.
The set of \emph{events} $\eset$ (ranged over by $\ae$, $\ae'$,
$\ldots$) is defined by
\[
\eset = \eset^! \cup \eset^? \cup \nodeset
\qquad\text{where}\qquad
\eset^! = \chanset \times  \{!\} \times \nodeset  \times \msgset
\qquad\text{and}\qquad
\eset^? = \chanset \times   \{?\} \times \nodeset \times \msgset
\]
Sets $\eset^!$ and $\eset^?$, the output and the input events,
respectively represent \emph{sending} and \emph{receiving} actions; we
shorten $(\achan,!,\gname,\msg)$ as $\aout$ and
$(\achan,?,\gname,\msg)$ as $\ain$.
The \emph{subject} of an action is
\[
\subject[\aout] = \p
\quad \text{($\p$ is the sender)}
\qquad\text{and}\qquad
\subject[\ain] = \p[B]
\quad
\text{($\p[B]$ is the receiver)}
\]
As will be clear later, events in $\nodeset$ represent
\quo{non-observable} actions, like (the execution of) a choice or a
merge; we take $\subject[\_]$ to be undefined on $\nodeset$.
We now continue by defining some auxiliary operations.

The \emph{communication action} of $\ae$ is
$\eact \ain = \ain[@][@][]$ and $\eact \aout = \aout[@][@][]$ and
undefined on $\nodeset$; we extend $\enode{}$ to events, so
$\enode \ae$ denotes the control point of an event $\ae$.
When considering sets of events $\aE \in \parts{\eset}$, we will
tacitly assume that any two events have different control points (that
is for all $\ae, \ae' \in \aE,\ \enode \ae \neq \enode{\ae'}$).
Also, we write $\ae \in \aG$ when there is an interaction $\gint$ in
$\aG$ such that $\ae \in \{\aout,\ain\}$, and accordingly $\aE
\subseteq \aG$ means that $\ae \in \aG$ for all $\ae \in \aE$.

A relation $\aR \subseteq \parts{\eset} \times \parts{\eset}$ on sets
of events is a directed hypergraph, that is a graph where nodes are
events and hyperarcs $\hopair$ relate sets of events, the source $\aE$
and the target $\aE'$.
(To avoid cumbersome parenthesis, singleton sets in hyperarcs are
shortened by their element, e.g., we write $\hopair[\ae][\aE]$
instead of $\hopair[\hyedge{\ae}][\aE]$.)
\begin{figure}[t]
  \centering
  \begin{subfigure}[b]{0.3\textwidth}
    \centering
  \begin{tikzpicture}[->,>=stealth',shorten >=1pt,auto,node distance=2cm, thick,main node/.style={font=\sffamily\bfseries},scale=.7,transform shape]

      \node[main node] (1) {$\aout[\p][\q][\gname_1][{\msg[x]}]$};
      \node[main node] (2) [below =0.7cm of 1] {$\ain[\p][\q][\gname_1][{\msg[x]}]$};
      \node[main node] (3) [below =0.7cm of 2] {$\aout[\q][\p][\gname_2][{\msg[y]}]$};
      \node[main node] (4) [below =0.7cm of 3] {$\ain[\q][\p][\gname_2][{\msg[y]}]$};

      \path[every node/.style={font=\sffamily\small}]
      (1) edge node {} (2)
      (2) edge node {} (3)
      (3) edge node {} (4)
      ;
    \end{tikzpicture}
    \caption{$\aR_\eqref{fig:hypergraphs:3}$}
    \label{fig:hypergraphs:3}
  \end{subfigure}
  ~
  \begin{subfigure}[b]{0.3\textwidth}
    \centering
    \begin{tikzpicture}[->,>=stealth',shorten >=1pt,auto,node distance=2cm,
      thick,main node/.style={
        font=\sffamily\bfseries},scale=.7,transform shape]

      \node[main node] (S) {$\gname_3$};
      \node[main node] (1) [below left of=S] {$\aout[\p][\q][\gname_1][{\msg[x]}]$};
      \node[main node] (2) [below of=1] {$\ain[\p][\q][\gname_1][{\msg[x]}]$};
      \node[main node] (3) [below right of=S] {$\aout[\p][\q][\gname_2][{\msg[y]}]$};
      \node[main node] (4) [below of=3] {$\ain[\p][\q][\gname_2][{\msg[y]}]$};
      \node[main node] (E) [below right of=2] {$\nmerge[\gname $_3$]$};

      \path[every node/.style={font=\sffamily\small}]
      (1) edge node {} (2)
      (3) edge node {} (4)

      (S) edge node {} (1)
      (S) edge node {} (3)

      (2) edge node {} (E)
      (4) edge node {} (E)
      ;
    \end{tikzpicture}
    \caption{$\aR_\eqref{fig:hypergraphs:1}$}
    \label{fig:hypergraphs:1}
  \end{subfigure}
  ~
  \begin{subfigure}[b]{0.3\textwidth}
    \centering
  \begin{tikzpicture}[->,>=stealth',shorten >=1pt,auto,node distance=2cm, thick,main node/.style={font=\sffamily\bfseries},scale=.7,transform shape]

      \node[main node] (S) {$\gname_3$};
      \node[main node] (f) [below =0.5cm of S] {};
      \node[main node] (1) [below left of=S] {$\aout[\p][\q][\gname_1][{\msg[x]}]$};
      \node[main node] (2) [below of=1] {$\ain[\p][\q][\gname_1][{\msg[x]}]$};
      \node[main node] (3) [below right of=S] {$\aout[\p][\q][\gname_2][{\msg[y]}]$};
      \node[main node] (4) [below of=3] {$\ain[\p][\q][\gname_2][{\msg[y]}]$};
      \node[main node] (E) [below right of=2] {$\nmerge[\gname $_3$]$};
      \node[main node] (j) [above =0.5cm of E] {};

      \path[every node/.style={font=\sffamily\small}]
      (1) edge node {} (2)
      (3) edge node {} (4)

      (S) edge[-] (f)
      (0,-0.8) edge (1)
      (0,-0.8) edge (3)

      (2) edge [-] (0,-4)
      (4) edge [-] (0,-4)

      (j) edge (E)
      ;
    \end{tikzpicture}
    \caption{$\aR_\eqref{fig:hypergraphs:2}$}
    \label{fig:hypergraphs:2}
  \end{subfigure}
  \caption{Some hypergraphs\label{fig:hypergraphs}}
\end{figure}
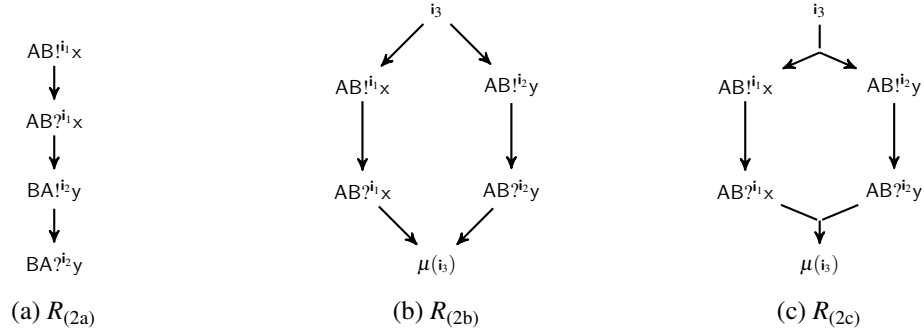
Examples of hypergraphs are depicted in \cref{fig:hypergraphs}; the
graphs $\aR_\eqref{fig:hypergraphs:3}$ and
$\aR_\eqref{fig:hypergraphs:1}$ contain only simple arcs, while the
graph $\aR_\eqref{fig:hypergraphs:2}$ contains two hyperarcs:
$\hopair[\gname_3][{\{ \aout[\p][\q][\gname_1][{\msg[x]}],
  \aout[\p][\q][\gname_2][{\msg[y]}] \}}]$
and
$\hopair[{\{ \ain[\p][\q][\gname_1][{\msg[x]}],
  \ain[\p][\q][\gname_2][{\msg[y]}] \}}][{\nmerge[{$\gname_3$}]}]$.
Intuitively, $\aR_\eqref{fig:hypergraphs:3}$ establishes a total
causal order from the top-most to the bottom-most event;
$\aR_\eqref{fig:hypergraphs:1}$ represents a choice at control point
$\gname_3$ between the left and the right branch; finally,
$\aR_\eqref{fig:hypergraphs:2}$ represents the parallel execution of
two threads at the control point $\gname_3$; note that the edge
$\hopair[{\gname_3}][{\{ \aout[\p][\q][\gname_1][{\msg[x]}],
  \aout[\p][\q][\gname_2][{\msg[y]}] \}}]$
of $\aR_\eqref{fig:hypergraphs:2}$ relates the event $\gname_3$ to
both $\aout[\p][\q][\gname_1][{\msg[x]}]$ and
$\aout[\p][\q][\gname_2][{\msg[y]}]$.

Let $\efst{1}{},\efst{2}{} : 2^\eset \times 2^\eset \to 2^\eset$ be
the maps projecting a relation on its components, that is:
$\efst{1}{\hopair[\aE][\aE']} = \aE$ and $\efst{2}{\hopair[\aE][\aE']}
= \aE'$.
Given $\aR,\aR' \subseteq \parts{\eset} \times \parts{\eset}$, define
the hypergraphs $\aR \circ \aR'$ and $\rtrs[\aR]$ respectively as
\[
  \aR \circ \aR' =
  \{\hopair \st \exists \hopair[@][\aE_1] \in \aR, \hopair[\aE_2][\aE'] \in \aR' \qst \aE_1 \cap \aE_2 \neq \emptyset\}
  \qquad \text{and} \qquad
  \rtrs[\aR] = \bigcup_n \underbrace{\aR \circ \cdots \circ \aR}_{n\text{-times}}
\]
Basically, $\rtrs[\aR]$ is the reflexo-transitive closure of $\aR$
with respect to the composition relation $\circ$.
In \cref{fig:causal} we give a simple example of how operation
$\_ \circ \_$ composes hyperedges (thick arrows) according to the
underlying causal relations (thin arrows); edges $\hopair$ and
$\hopair[\ae_i'][\aE'']$ are composed to form the edge
$\hopair[@][\aE'']$, which relates each event in $\aE$ to all those in
$\aE''$.

We define the maximal and minimal elements of $\aR$ respectively as
\[
  \rmax = \{ \ae \in \eset \st \not\exists \hopair \in \aR \wedge
              \ae \in \aE\}
  \qquad \text{and} \qquad
  \rmin = \{ \ae \in \eset \st \not\exists \hopair \in \aR \wedge
              \ae \in \aE' \}
\]
For instance, $\aR_\eqref{fig:hypergraphs:1}$ and
$\aR_\eqref{fig:hypergraphs:2}$ in \cref{fig:hypergraphs}
respectively have
$\rmin[\aR_\eqref{fig:hypergraphs:1}] =
\rmin[\aR_\eqref{fig:hypergraphs:2}] = \{\gname_3\} $
and
$ \rmax[\aR_\eqref{fig:hypergraphs:1}] =
\rmax[\aR_\eqref{fig:hypergraphs:2}] = \{\nmerge[{\gname $_3$}]\}$,
while the minimal and maximal elements of
$\aR_\eqref{fig:hypergraphs:3}$ are
$\aout[\p][\q][\gname_1][{\msg[x]}]$ and
$\ain[\q][\p][\gname_2][{\msg[y]}]$ respectively.
We also need to define the (hyperedges involving) \quo{last} and the
\quo{first} communication actions in $\aR$.
\[
  \rMAX = \{ \hopair \in \aR \st \aE' \cap \nodeset = \emptyset  \ \land\
  \forall \hopair[\aE'][\aE''] \in \rtrs[\aR] \qst \aE'' \subseteq \nodeset \}
  \qquad\text{and}\qquad
  \rMIN  =  \big(\rMAX[{(\aR^{-1})}]\big)^{-1}
\]
For instance, the \quo{first} and the \quo{last} communication actions
of $\aR_\eqref{fig:hypergraphs:3}$ in \cref{fig:hypergraphs} are $\{
\hopair [{\aout[\p][\q][\gname_1][{\msg[x]}]}]
[{\ain[\p][\q][\gname_1][{\msg[x]}]}] \}$ and $\{ \hopair
[{\aout[\q][\p][\gname_2][{\msg[y]}]}]
[{\ain[\q][\p][\gname_2][{\msg[y]}]}] \}$ respectively, while
$\aR_\eqref{fig:hypergraphs:1}$ and $\aR_\eqref{fig:hypergraphs:2}$
have the same \quo{first} and the \quo{last} communication actions ($
\rMIN[\aR_\eqref{fig:hypergraphs:1}] =
\rMIN[\aR_\eqref{fig:hypergraphs:2}] = \{ \hopair
[{\aout[\p][\q][\gname_1][{\msg[x]}]}]
[{\ain[\p][\q][\gname_1][{\msg[x]}]}], \hopair
[{\aout[\p][\q][\gname_2][{\msg[y]}]}]
[{\ain[\p][\q][\gname_2][{\msg[y]}]}] \} =
\rMAX[\aR_\eqref{fig:hypergraphs:1}] =
\rMAX[\aR_\eqref{fig:hypergraphs:2}] $).

We can now define $\rseq[\aR][\aR']$, the \emph{sequential}
composition of relations $\aR$ and $\aR'$ on $\eset$ as follows:
\begin{align*}
  \rseq[\aR][\aR']\ = & \quad \aR \cup \aR'
 \cup \big\{\hopair[\ae][\ae'] \in \big(\parts{\eset \setminus \nodeset}\big)^2 \st
           \exists \hopair[\aE_1][\aE_2] \in \rMAX,
           \hopair[\aE'_1][{\aE'_2}] \in \rMIN[\aR'] \qst
           \\  & 
           \hspace{3.5cm}
           \ae \in (\aE_1 \cup \aE_2) \setminus \nodeset
           \ \land\
           \ae' \in (\aE'_1 \cup \aE'_2) \setminus \nodeset
           \ \land\
           \subject = \subject[\ae'] \big\}
\end{align*}
The sequential composition of two hypergraphs $\aR$ and $\aR'$
preserves the causal dependencies of its constituents, namely those in
$\aR \cup \aR'$.
Additionally, dependencies are established between every event in $\rMAX$
and every event in $\rMIN[\aR']$ that have the same subject.
\cref{fig:seq} depicts the sequential compositions of two
hypergraphs, say $\aR$ and $\aR'$.
The former hypergraph corresponds to the interaction $\gint$, while
the second ranges over the interactions
\[
\gint[\gname'][@][{\msg[y]}][{\p[C]}]
\qquad
\gint[\gname'][{\p[B]}][{\msg[y]}][{\p[C]}]
\qquad
\gint[\gname'][{\p[C]}][{\msg[y]}][{\p[B]}]
\qquad
\gint[\gname'][@][{\msg[y]}][@]
\qquad
\gint[\gname'][{\p[C]}][{\msg[y]}][{\p[D]}]
\]
with the events at control point $\gname$ belonging to $\aR$ and those
at control point $\gname[i]'$ belonging to $\aR'$; also, simple arrows
represent the dependencies induced by the subjects and dotted arrows
represent dependencies induced by the sequential composition (the
meaning of stroken arrows will be explained in \cref{sem:sec});
basically a causal relation is induced whenever a participant
performing a (last) communication of $\aR$ also starts a communication
in $\aR'$.

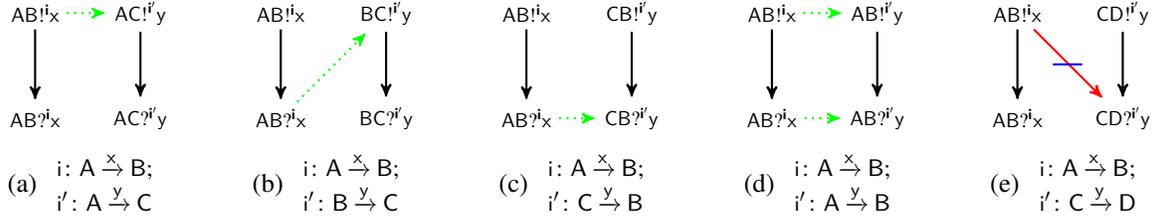
\begin{figure}[t]
  \centering
  \begin{subfigure}[b]{0.15\textwidth}
    \centering
    \begin{tikzpicture}[->,>=stealth',shorten >=1pt,auto,node distance=2cm,
      thick,main node/.style={font=\sffamily\bfseries},scale=.7,transform shape]
      \node[main node] (1) {$\aout[\p][\q][\gname][{\msg[x]}]$};
      \node[main node] (2) [below of=1] {$\ain[\p][\q][\gname][{\msg[x]}]$};
      \node[main node] (3) [right of=1] {$\aout[\p][{\p[C]}][\gname'][{\msg[y]}]$};
      \node[main node] (4) [below of=3] {$\ain[\p][{\p[C]}][\gname'][{\msg[y]}]$};
      \path[every node/.style={font=\sffamily\small}]
      (1) edge node {} (2)
      (1) edge[green,dotted] node {} (3)
      (3) edge node {} (4)
      ;
    \end{tikzpicture}
    \caption{\footnotesize
      $
      \begin{array}[l]{l}
        \gseq[][{\gint[\gname][\p][{\msg[x]}][\q]}][]
        \\
        \gint[\gname'][\p][{\msg[y]}][{\p[C]}]
      \end{array}
      $
    }
    \label{fig:seq:1}
  \end{subfigure}
  \qquad
  \begin{subfigure}[b]{0.15\textwidth}
    \centering
    \begin{tikzpicture}[->,>=stealth',shorten >=1pt,auto,node distance=2cm,
      thick,main node/.style={font=\sffamily\bfseries},scale=.7,transform shape]
      \node[main node] (1) {$\aout[\p][\q][\gname][{\msg[x]}]$};
      \node[main node] (2) [below of=1] {$\ain[\p][\q][\gname][{\msg[x]}]$};
      \node[main node] (3) [right of=1] {$\aout[\q][{\p[C]}][\gname'][{\msg[y]}]$};
      \node[main node] (4) [below of=3] {$\ain[\q][{\p[C]}][\gname'][{\msg[y]}]$};
      \path[every node/.style={font=\sffamily\small}]
      (1) edge node {} (2)
      (2) edge[green,dotted] node {} (3)
      (3) edge node {} (4)
      ;
    \end{tikzpicture}
    \caption{\footnotesize
      $
      \begin{array}[l]{l}
        \gseq[][{\gint[\gname][\p][{\msg[x]}][\q]}][]
        \\
        \gint[\gname'][\q][{\msg[y]}][{\p[C]}]
      \end{array}
      $
    }
    \label{fig:seq:2}
  \end{subfigure}
  \qquad
  \begin{subfigure}[b]{0.15\textwidth}
    \centering
    \begin{tikzpicture}[->,>=stealth',shorten >=1pt,auto,node distance=2cm,
      thick,main node/.style={font=\sffamily\bfseries},scale=.7,transform shape]
      \node[main node] (1) {$\aout[\p][\q][\gname][{\msg[x]}]$};
      \node[main node] (2) [below of=1] {$\ain[\p][\q][\gname][{\msg[x]}]$};
      \node[main node] (3) [right of=1] {$\aout[{\p[C]}][\q][\gname'][{\msg[y]}]$};
      \node[main node] (4) [below of=3] {$\ain[{\p[C]}][\q][\gname'][{\msg[y]}]$};
      \path[every node/.style={font=\sffamily\small}]
      (1) edge node {} (2)
      (3) edge node {} (4)
      (2)  edge[green,dotted] node {} (4)
      ;
    \end{tikzpicture}
    \caption{\footnotesize
      $
      \begin{array}[l]{l}
        \gseq[][{\gint[\gname][\p][{\msg[x]}][\q]}][]
        \\
        \gint[\gname'][{\p[C]}][{\msg[y]}][\q]
      \end{array}
      $
    }
    \label{fig:seq:3}
  \end{subfigure}
  \qquad
  \begin{subfigure}[b]{0.15\textwidth}
    \centering
    \begin{tikzpicture}[->,>=stealth',shorten >=1pt,auto,node distance=2cm,
      thick,main node/.style={font=\sffamily\bfseries},scale=.7,transform shape]
      \node[main node] (1) {$\aout[\p][\q][\gname][{\msg[x]}]$};
      \node[main node] (2) [below of=1] {$\ain[\p][\q][\gname][{\msg[x]}]$};
      \node[main node] (3) [right of=1] {$\aout[\p][\q][\gname'][{\msg[y]}]$};
      \node[main node] (4) [below of=3] {$\ain[\p][\q][\gname'][{\msg[y]}]$};
      \path[every node/.style={font=\sffamily\small}]
      (1) edge node {} (2)
      (1) edge[green,dotted] node {} (3)
      (3) edge node {} (4)
      (2)  edge[green,dotted] node {} (4)
      ;
    \end{tikzpicture}
    \caption{\footnotesize
      $
      \begin{array}[l]{l}
        \gseq[][{\gint[\gname][\p][{\msg[x]}][\q]}][]
        \\
        \gint[\gname'][\p][{\msg[y]}][\q]
      \end{array}
      $
    }
    \label{fig:seq:4}
  \end{subfigure}
  \qquad
  \begin{subfigure}[b]{0.15\textwidth}
    \centering
    \begin{tikzpicture}[->,>=stealth',shorten >=1pt,auto,node distance=2cm,
      thick,main node/.style={font=\sffamily\bfseries},scale=.7,transform shape]
      \node[main node] (1) {$\aout[\p][\q][\gname][{\msg[x]}]$};
      \node[main node] (2) [below of=1] {$\ain[\p][\q][\gname][{\msg[x]}]$};
      \node[main node] (3) [right of=1] {$\aout[{\p[C]}][{\p[D]}][\gname'][{\msg[y]}]$};
      \node[main node] (4) [below of=3] {$\ain[{\p[C]}][{\p[D]}][\gname'][{\msg[y]}]$};
      \path[every node/.style={font=\sffamily\small}]
      (1) edge node {} (2)
      (3) edge node {} (4)
      (1) edge[red,strike thru arrow] node {} (4)
      ;
    \end{tikzpicture}
    \caption{\footnotesize
      $
      \begin{array}[l]{l}
        \gseq[][{\gint[\gname][\p][{\msg[x]}][\q]}][]
        \\
        \gint[\gname'][{\p[C]}][{\msg[y]}][{\p[D]}]
      \end{array}
      $
    }
    \label{fig:seq:error}
  \end{subfigure}
  \caption{Examples of sequential composition}
  \label{fig:seq}
\end{figure}

We now define the concept of \quo{common} part of two hypergraphs $\aR$
and $\aR'$ with respect to a participant $\p$.
For this we need to introduce the \emph{happens-before} relation
\[
\widehat \aR = \{ \opair \in \eset \times \eset \st \exists \hopair
\in \aR \qst \ae \in \aE \text{ and } \ae' \in \aE'\} \subseteq \eset
\times \eset
\]
induced by a relation $\aR$ ($\opair \in \widehat \aR$ when $\ae$
precedes $\ae'$ in $\aR$, namely $\widehat \aR$ are the causal
dependencies among the events in $\aR$).
\cref{fig:causal} yields an intuitive representation of how causal
relations follow composition: the events in $\aE$ cause all the events
in $\aE''$ due to the dependency of the event $\ae_i'$ from the events
in $\aE$ and the fact that $\ae_i$ causes all events in $\aE''$.
\begin{figure}[t]
  \centering
    \begin{tikzpicture}[scale=.65,transform shape]
    \node (x) {};
    \node[right=of x] (E) {$\aE$};
    \node[right=of E] (eq) {$ = \{ $};
    \node[right=of eq] (e1) {$\ae_1$};
    \node[right=of e1] (d1) {$\cdots $};
    \node[right=of d1] (eh) {$\ae_h$};
    \node[right=of eh] (c) {$\}$};
    \node[below=of E] (db) {};
    \node[below=of c] (de) {};
    %
    \node[below=of db] (E') {$\aE'$};
    \node[right=of E'] (eq') {$ = \{ $};
    \node[right=of eq'] (e1') {$\ae_1'$};
    \node[right=of e1'] (c') {$\cdots $};
    \node[right=of c'] (eh') {$\ae_i'$};
    \node[below=of de] (c') {$\}$};
    \node[below=of E'] (db') {};
    \node[below=of c'] (de') {};
    \path (E) edge[line,line width=2pt,color=black!20] node[sloped,above,color=black] {$\hopair$} (E');
    \node[below=of db'] (E2) {$\aE''$};
    \node[right=of E2] (eq'') {$= \{$};
    \node[right=of eq''] (e1'') {$\ae_1''$};
    \node[right=of e1''] (d1'') {$\cdots $};
    \node[right=of d1''] (eh'') {$\ae_j''$};
    \node[right=of eh''] (c'') {$\}$};
    \path[line] (e1) -- (e1');
    \path[line,color=black!20] (e1) -- (e1');
    \path[line] (e1) -- (eh');
    \path[line,color=green] (e1) -- (eh');
    \path[line] (eh) -- (e1');
    \path[line,color=black!20] (eh) -- (e1');
    \path (E') edge[line,line width=2pt,color=red,dashed] node[sloped,above,color=black] {$\hopair[\ae_i'][\aE'']$} (E2);
    \path[line] (eh) -- (eh');
    \path (eh) edge[line,color=green]  (eh');
    \path[line, color=red] (eh') -- (e1'');
    \path[line, color=red] (eh') -- (eh'');
    \path (E) edge[line, line width=3pt, bend right, bend angle=20] node[sloped,  yshift=.5cm, rotate=180] {$\hopair[@][\aE'']$} (E2);
    %
  \end{tikzpicture}
  \caption{Happens-before\label{fig:causal}}
\end{figure}
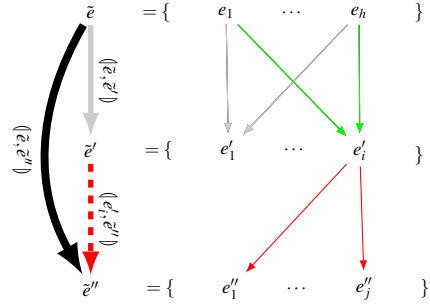

A set of events $\aE'$ in $\aR'$  \emph{$\p$-reflects} a set of
  events $\aE$ in $\aR$ if there is a bijection
$\common : \aE \to \aE'$ such that:
\begin{itemize}
\item $\forall \ae \in \aE \qst \subject = \subject[\common(\ae)] = \p
  \ \land\ \eact{\ae} = \eact{\common(\ae)}$ and
\item
  $\forall \ae' \in \aE \ \forall \opair \in \widehat \aR \qst
  \subject[\ae] = \p \implies \big(\ae \in \aE \ \land\
  \opair[\common(\ae)][\common(\ae')] \in \widehat{\aR'}\big)$ and
\item $\forall \ae' \in \common(\aE) \ \forall \opair \in \widehat \aR' \qst \subject[\ae] = \p
  \implies \big( \ae \in \common(\aE) \ \land\ \opair[\common^{-1}(\ae)][\common^{-1}(\ae')] \in
  \widehat{\aR}\big)$.
\end{itemize}
The notion of reflections is new;  an
intuitive explanation is given in \cref{fig:reflectivity}.
\begin{figure}
  \centering
  \begin{tabular}{cl}
    \begin{minipage}{.4\linewidth}
      \includegraphics[scale=.35]{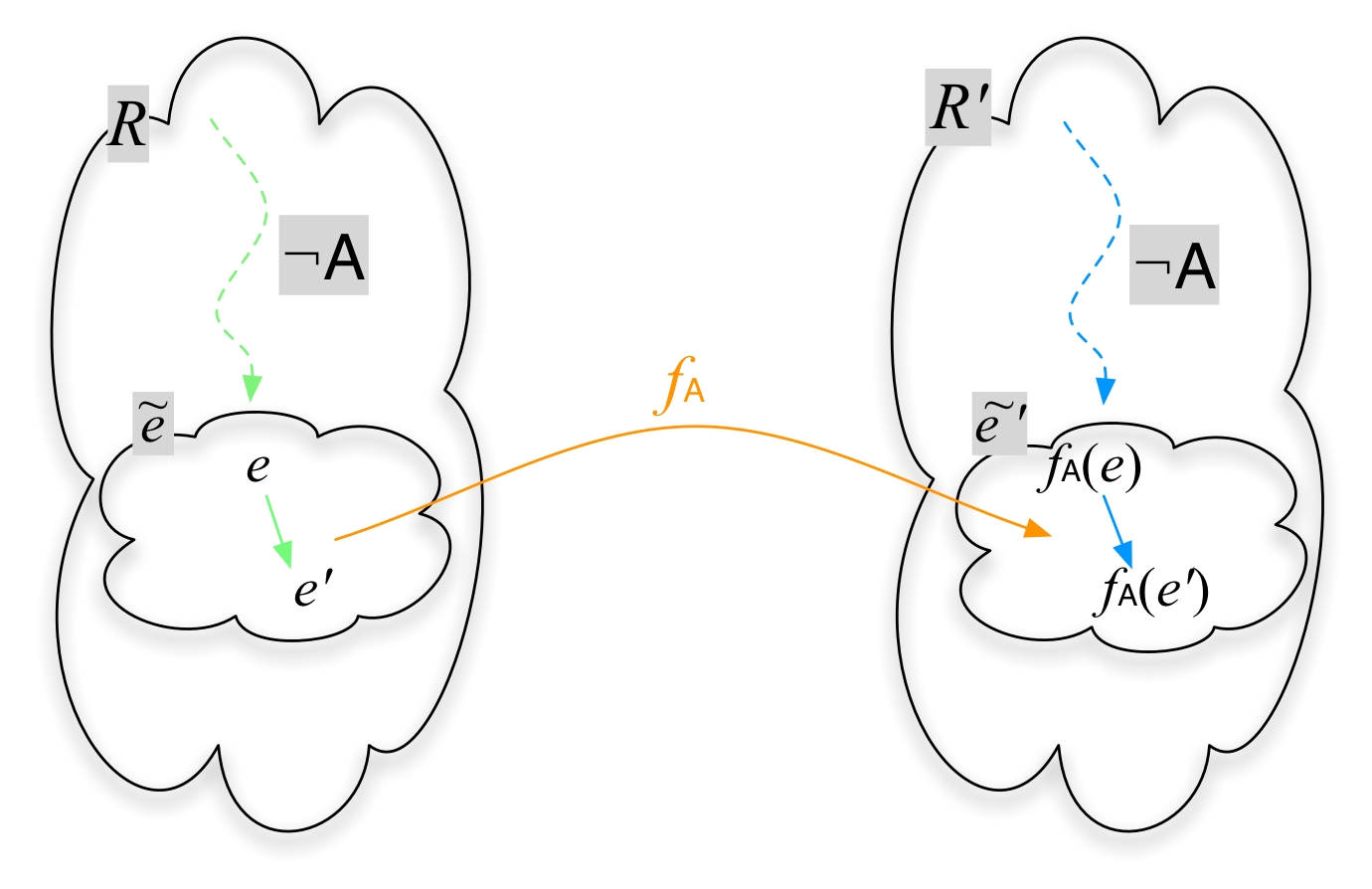}
    \end{minipage}
    &
    \begin{minipage}{.5\linewidth}
      The causal relations of $R$ and $R'$ have to be thought of as
      the ones of two branches of a distributed choice.
      All the events of $\aE \subseteq \aR$ have subject $\p$, the
      selector of the choice.
      Likewise for $\aE' \subseteq \aR'$.
      
      The bijection $\common$ preserves both actions and causality
      relation in $\aE$.
      Moreover, $\aE$ have to be such that any event with subject $\p$
      causing an event of $\aE$ is also a member of $\aE$, and
      similarly for $\aE'$.
    \end{minipage}
  \end{tabular}
  \caption{Reflectivity\label{fig:reflectivity}}
\end{figure}
Reflectivity will allow us to define \emph{active} and \emph{passive}
participants in a choice.



\section{Semantics of Choreographies }\label{sem:sec}
The semantics of g-choreography is the partial map
$\gsem[\_][{\nmerge[][]}] : \gset \to \parts{(\parts{\eset}
  \times \parts{\eset})}$ defined\footnote{We assume $\nmerge[][]$ to
  be understood and simply write $\gsem[\_]$.} as:
\begin{align*}
  \gsem[\gempty] & = \emptyset  
  \\
  \gsem[\gint]  & = \{ \hopair[\aout][\ain]\}
  \\
  \gsem[\gpar] & = \gsem[\aG] \cup \gsem[\aG']
  \\
  \gsem[{\gseq[]}] & = \begin{cases}
    \rseq[{\gsem[\aG]}][{\gsem[\aG']}]
    & \mbox{if }
    \rtrs[{\big(\rseq[{\gsem[\aG]}][{\gsem[\aG']}]\big)}] \supseteq \efst{1}{\rMAX[{\gsem}]} \times \efst{2}{\rMIN[{\gsem[\aG']}]}
    \\
    \rbot & \mbox{otherwise}
  \end{cases}
  \\
  \gsem[\gcho] & =  
  \begin{cases}
    \gsem[\aG] \cup \gsem[\aG'] \cup \aR
    &
    \mbox{if } \aR = 
    \{
    \hopair[\gname][{\rmin[{\gsem[\aG]}]}],
    \hopair[\gname][{\rmin[{\gsem[\aG']}]}],
    \hopair[{\rmax[{\gsem[\aG]}]}][{\gname[{$\nmerge[@][]$}]}],
    \hopair[{\rmax[{\gsem[\aG']}]}][{\gname[{$\nmerge[@][]$}]}]
    \}
    \\
    &
    \text{ and }
    \wb[\aG][\aG']
    \\
    \rbot & \mbox{otherwise}
  \end{cases}
\end{align*}
The semantics of the the empty g-choreography $\gempty$ and of
interaction $\gint$ are straightforward; for the latter, the send part
$\aout$ of the interaction must precede its receive $\ain$ part.

For the parallel composition $\gpar$ we just take the union of the
dependencies of $\aG$ and $\aG'$, thus allowing the arbitrary
interleaving of those events.

The semantics of sequential composition $\gseq$ establishes
happens-before relations as computed by
$\rseq[{\gsem[\aG]}][{\gsem[\aG']}]$ provided that they cover the
dependencies between the last communication actions of $\aG$ with the
first actions of $\aG'$.
This condition ensures the soundness of the composition; when it does
not hold, then there is a participant $\ptp$ in $\aG'$ that cannot
ascertain if all the events of $\aG$ did happen before $\ptp$ could
start.
All examples in \cref{fig:seq} are sound, barred the one in
\cref{fig:seq:error}, where the stroken edge depicts the missing
dependency that is not guaranteed by the hypergraph.

The semantics of a choice $\gcho$ is defined provided that the
\emph{well-branched} condition $\wb[\aG][\aG']$ holds on $\aG$ and
$\aG'$, that is when ($i$) there is at most one \emph{active}
participant and ($ii$) all the other participants are
\emph{passive}.
In a moment, after some auxiliary definitions, we define active and
passive participants.
Intuitively, the notions of active and passive participant single out
respectively participants $\p$ that do not make an internal choice,
namely it is not $\p$ selecting whether to execute $\aG$ or $\aG'$ and
those participants instead that (internally) select which branch to
execute.
Besides the dependencies induced by $\aG$ and $\aG'$, $\gsem[\gcho]$
contain those making $\gname$ (the control point of the branch)
precede all minimal events of $\aG$ and $\aG'$; similarly, the maximal
events of $\aG$ and $\aG'$ have to precede the conclusion of the
choice (marked by the control point $\nmerge$).
Notice that no additional dependency is required.
In fact, during one instance of the g-choreography either the actions
of the first branch or the actions of the second one will be
performed.

\paragraph{Auxiliary definitions}
The relation $\gord$ is the \emph{happens-before relation induced by
  $\aG \in \gset$} defined as $\gord = \widehat{(\rtrs[\gsem])}$ if
$\gsem$ is defined, and $\gord = \emptyset$ otherwise.
Notice that $\gord$ is a partial order on the events of $\aG$.
For $\p \in \ptpset$, the \emph{$\p$-only} part of a set of events
$\aE \in \parts{\eset}$ is the set $\eproj$ where the actions of $\aE$
not having subject $\p$ are replaced with the control point of the
action; formally
\begin{eqnarray*}
  \eproj & = &
  \{\ae \in \aE \mid \subject[\ae] = \p \ \lor\ \ae \in \nodeset\}
  \\  &  &  \cup
  \{\enode \ae \mid \ae \in \aE \cap \eset^! \ \land\ \subject  \neq \p\}
  \cup 
  \{\nmerge[{\enode \ae}][] \mid \ae \in \aE \cap \eset^? \ \land\ \subject  \neq \p\}
\end{eqnarray*}
Accordingly, the \emph{$\p$-only} part of a hypergraphs $\aR$ is
defined as
$\hproj = \left \{ \hopair[\eproj][{\eproj[{\aE'}]}] \st \hopair
  [\aE][\aE'] \in \aR \right \}$.
Notice that we use $\enode \ae$ and $\nmerge[{\enode \ae}][]$ for
outputs and inputs respectively, so that different events not
belonging to $\p$ remain distinguished.

Given a participant $\p \in \ptpset$, two g-choreographies
$\aG,\aG' \in \gset$, and two sets of events $\aE \subseteq \aG$ and
$\aE' \subseteq \aG'$ the \emph{$\p$-branching pair of $\gcho[]$ with
  respect to $\aE$ and $\aE'$} (written $\rrdiv$) is
\[
\rrdiv = (\aE_1, \aE_2)
\qquad \text{where}\qquad
\aE_1  =  \bigcup{\efst{1}{\rMIN[{(\hproj[\gsem])}]} \setminus \aE}
\quad\text{and}\quad
\aE_2  = \bigcup{\efst{1}{\rMIN[{(\hproj[{\gsem[{\aG'}]}])}]} \setminus \aE'}
\]
provided that $\aE'$ $\p$-reflects $\aE$ (otherwise $\rrdiv$ is undefined).
Intuitively, the behaviour of $\p$ in the two branches $\aG$ and
$\aG'$ can be the same up to the point of branching $\rrdiv$.
The \emph{$\p$-reflectivity} is used to identify such common behaviour
(i.e.  all events in $\aE$ and $\aE'$) and to ignore it when checking
the behaviour of $\p$ in the branches.
In fact, by taking the $\p$-only parts of these hypergraphs and
selecting their fist interactions (that is the $\p$-branching pair
$\aE_1$, $\aE_2$) we identify when the behaviour of $\p$ in $\aG$
starts to be different with respect to behaviour in $\aG'$.

\paragraph{Active and passive roles}
The intersection of sets of events $\tilde e \sqcap \tilde e'$
disregards control points:
$\tilde e \sqcap \tilde e' = \{act(e): e \in \tilde e\} \cap
\{act(e'): e' \in \tilde e'\}$.
A participant $\p \in \ptpset$ is \emph{passive} in $\gcho[]$ with
respect to $\aE$ and $\aE'$ if, assuming $(\aE_1, \aE_2) = \rrdiv$,
the following hold
\[
\begin{array}[c]{c@{\qquad\qquad}c}
  \aE_1 \sqcap \{\ae \in \aG' \st \not
  \exists \ae' \in \aE_2 \qst \ae \gord[\aG'] \ae' \} = \emptyset
  &
  \aE_1 \cup \aE_2 \subseteq \eset^?
  \\
  \aE_2 \sqcap \{\ae \in \aG \st \not \exists \ae' \in \aE_1 \qst \ae \gord \ae' \} = \emptyset
  &
  \aE_1 = \emptyset \iff \aE_2 = \emptyset
\end{array}
\]
Thus, the behaviour of $\p$ in $\aG$ and $\aG'$ must be the same up to
a point where she receives either of two different messages, each one
identifying which branch had been selected.
Clearly, $\p$ cannot perform outputs at the points of branching.
We say that a participant $\ptp$ is \emph{passive} in $\gcho[]$ if 
such $\aE$ and $\aE'$ exist.

A participant $\p \in \ptpset$ is \emph{active} in $\gcho[][@][@]$
with respect to $\aE$ and $\aE'$ if, assuming
$(\aE_1, \aE_2) = \rrdiv$,
\[
\begin{array}[c]{c@{\qquad\qquad}c@{\qquad\qquad}c@{\qquad\qquad}c}
  \aE_1 \cup \aE_2 \subseteq \eset^! &
  \aE_1 \sqcap \aE_2 = \emptyset &
  \aE_1 \neq \emptyset & \aE_2 \neq \emptyset
\end{array}
\]
Thus, the behaviour of $\p$ in $\aG$ and $\aG'$ must be the same up to
the point where she informs the other participants, by sending
different messages, which branch she choses.
We say that a participant $\ptp$ is \emph{active} in $\gcho[]$ if 
such $\aE$ and $\aE'$ exist.
Interestingly, if one takes the empty reflection in the determination
of active and passive roles, the definition above yield exactly the
same notions used e.g., in~\cite{honda16jacm,bmt14,cdyp16}.

\paragraph{Some examples}
When it exists, the active participant is the selector of the choice.
Unlike its corresponding notions in the rest of the literature,
well-branchedness does not require the selector to exist.
For instance, the choreography
\[\begin{tikzpicture}[node distance=.6cm and 1cm, scale = .7, every node/.style={transform shape}]
  \node[src] at (0,0) (src) {};
  \node[ogate,below=of src,label=below:$\gname$] (par) {};
  \node[node, left=of par] (s1) {};
  \node[node, right=of par] (s2) {};
  \node[bblock, below=of s1] (g) {$\gint[]$};
  \node[bblock, below=of s2] (gp) {$\gint[]$};
  \node[node, below=of g] (sg) {};
  \node[node, below=of gp] (sgp) {};
  \node[ogate, right=of sg,label=above:$\nmerge$] (join) {};
  \node[sink, below=of join] (sink) {};
  \path[line] (src) -- (par);
  \path[line] (par) -- (s1);
  \path[line] (par) -- (s2);
  \path[line] (s1) -- (g);
  \path[line] (s2) -- (gp);
  \path[line] (g) -- (sg);
  \path[line] (gp) -- (sgp);
  \path[line] (sg) -- (join);
  \path[line] (sgp) -- (join);
  \path[line] (join) -- (sink);
\end{tikzpicture}
\qquad = \qquad
\gcho[@][{\gint[]}][{\gint[]}]
\]
is well-branched even if it has no active participant.
Another example (usually discharged in the literature by imposing
syntactic constraints) is
$\gcho[]
[{\gseq[][{\gint[{\gname[i]}]}][{{\gint[][{\ptp[B]}][{\msg[x]}][C]}}]}]
[{\gseq[][{\gint[{\gname[j]}]}][{\gint[][{\ptp[B]}][{\msg[y]}][C]}]}]$;
here the problem is that the two branches have the same first
interactions.
However, using reflection on the $\hopair[\aout][\ain]$ and
$\hopair[{\aout[@][@][{\gname[j]}]}][{\ain[@][@][{\gname[j]}]}]$, our
framework establishes that $\ptp[B]$ is active, and both $\ptp$ and
$\ptp[C]$ are passive, making the choice well-branched.
We are not aware of any other framework where the cases above are
considered valid choreographies.

The hypergraphs in \cref{fig:hypergraphs:1} and
\cref{fig:choice} are respectively the semantics of
the g-choreographies
\begin{align}
  \aG_\eqref{fig:hypergraphs:1} & = 
  \gcho[\gname_3]
  [{
    \gint[\gname_1][\p][{\msg[x]}][\q]
  }]
  [{
    \gint[\gname_2][\p][{\msg[y]}][\q]
  }]
  \\
  \aG_\eqref{fig:choice:2} & =
  \gcho[\gname_3]
  [{
    \gint[\gname_1][\p][{\msg[x]}][\q]
  }]
  [{
    \gint[\gname_2][\p][{\msg[y]}][{\ptp[C]}]
  }]
  \\
  \aG_\eqref{fig:choice:3} & = 
  \gcho[\gname_5]
  [{ \left (
      \gseq[]
      [{
        \gint[\gname_1][\p][{\msg[x]}][\q]
      }]
      [{
        \gint[\gname_2][\q][{\msg[y]}][{\ptp[C]}]
      }]
    \right )
  }]
  [{ \left (
      \gseq[]
      [{
        \gint[\gname_3][\p][{\msg[z]}][{\ptp[C]}]
      }]
      [{
        \gint[\gname_4][{\ptp[C]}][{\msg[w]}][\q]
      }]
    \right )
  }]
\end{align}
\textbf{\cref{fig:hypergraphs:1}} the choice is well-branched;
participant $\q$ is passive (receiving either
$\ain[\p][\q][][{\msg[x]}]$ or $\ain[\p][\q][][{\msg[y]}]$ in the
point of branching) and participant $\p$ is active (sending either
$\aout[\p][\q][][{\msg[x]}]$ or $\aout[\p][\q][][{\msg[y]}]$ in the
point of branching).
\\
\textbf{\cref{fig:choice:2}} the choice is not well-branched;
participant $\p$ is active (sending either
$\aout[\p][\q][][{\msg[x]}]$ or $\aout[\p][{\ptp[C]}][][{\msg[y]}]$ in
the point of branching), however, $\q$ (and $\ptp[C]$) is neither
passive nor active (in one branch the events of branching is
$\ain[\p][\q][][{\msg[x]}]$ while for the other branch it is empty).
\\
\textbf{\cref{fig:choice:3}} the choice is well-branched; $\p$ is
active (sending either $\aout[\p][\q][][{\msg[x]}]$ or
$\aout[\p][{\ptp[C]}][][{\msg[z]}]$ in the point of branching), $\q$
is passive (it receives either $\ain[\p][\q][][{\msg[x]}]$ or
$\ain[{\ptp[C]}][\q][][{\msg[w]}]$ in the events of branching), and
$\ptp[C]$ is passive (it receives either
$\ain[\q][{\ptp[C]}][][{\msg[y]}]$ or
$\ain[\p][{\ptp[C]}][][{\msg[z]}]$ in the branching events).
\\
\textbf{\cref{fig:choice:4}} the choice is well-branched; $\p$ is
active (it has the same behaviour in the branches $\gname_3$ and
$\gname_6$, so its branching events are
$\aout[\p][{\ptp[C]}][][{\msg[z]}]$ and
$\aout[\p][{\ptp[C]}][][{\msg[w]}]$), $\q$ is passive (having the same
behaviour in the branches $\gname_3$ and $\gname_6$ and empty sets of
branching), and $\ptp[C]$ is passive (its branching events are the
inputs $\ain[\p][{\ptp[C]}][][{\msg[z]}]$
$\ain[\p][{\ptp[C]}][][{\msg[w]}]$).
%

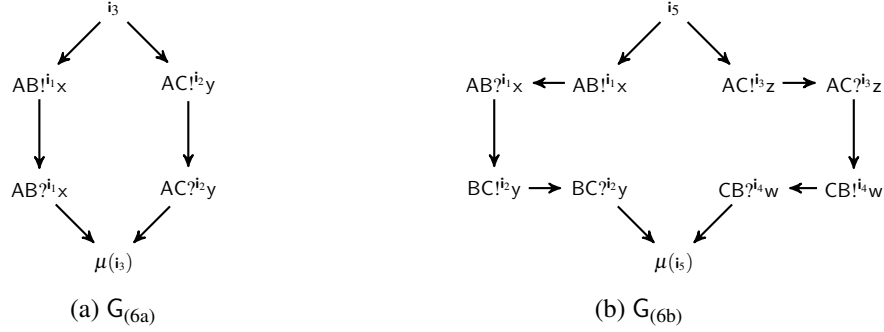
\begin{figure}[ht]
  \centering




  \begin{subfigure}[t]{0.3\textwidth}
    \centering
    \begin{tikzpicture}[->,>=stealth',shorten >=1pt,auto,node distance=2cm,
      thick,main node/.style={
        font=\sffamily\bfseries},scale=.7,transform shape]

      \node[main node] (S) {$\gname_3$};
      \node[main node] (1) [below left of=S] {$\aout[\p][\q][\gname_1][{\msg[x]}]$};
      \node[main node] (2) [below of=1] {$\ain[\p][\q][\gname_1][{\msg[x]}]$};
      \node[main node] (3) [below right of=S] {$\aout[\p][{\ptp[C]}][\gname_2][{\msg[y]}]$};
      \node[main node] (4) [below of=3] {$\ain[\p][{\ptp[C]}][\gname_2][{\msg[y]}]$};
      \node[main node] (E) [below right of=2] {$\nmerge[\gname $_3$]$};

      \path[every node/.style={font=\sffamily\small}]
      (1) edge node {} (2)
      (3) edge node {} (4)

      (S) edge node {} (1)
      (S) edge node {} (3)

      (2) edge node {} (E)
      (4) edge node {} (E)
      ;
    \end{tikzpicture}
    \caption{$\aG_\eqref{fig:choice:2}$}
    \label{fig:choice:2}
  \end{subfigure}
  \hspace{2cm}
  \begin{subfigure}[t]{0.3\textwidth}
    \centering
    \begin{tikzpicture}[->,>=stealth',shorten >=1pt,auto,node distance=2cm,
      thick,main node/.style={
        font=\sffamily\bfseries},scale=.7,transform shape]

      \node[main node] (S) {$\gname_5$};

      \node[main node] (1) [below left of=S] {$\aout[\p][\q][\gname_1][{\msg[x]}]$};
      \node[main node] (2) [left of=1] {$\ain[\p][\q][\gname_1][{\msg[x]}]$};
      \node[main node] (1N1) [below of=2] {$\aout[\q][{\ptp[C]}][\gname_2][{\msg[y]}]$};
      \node[main node] (1N2) [right of=1N1] {$\ain[\q][{\ptp[C]}][\gname_2][{\msg[y]}]$};

      \node[main node] (3) [below right of=S] {$\aout[\p][{\ptp[C]}][\gname_3][{\msg[z]}]$};
      \node[main node] (4) [right of=3] {$\ain[\p][{\ptp[C]}][\gname_3][{\msg[z]}]$};

      \node[main node] (2N1) [below of=4] {$\aout[{\ptp[C]}][\q][\gname_4][{\msg[w]}]$};
      \node[main node] (2N2) [left of=2N1] {$\ain[{\ptp[C]}][\q][\gname_4][{\msg[w]}]$};

      \node[main node] (E) [below right of=1N2] {$\nmerge[\gname $_5$]$};

      \path[every node/.style={font=\sffamily\small}]
      (1) edge node {} (2)
      (2) edge node {} (1N1)
      (1N1) edge node {} (1N2)

      (3) edge node {} (4)
      (4) edge node {} (2N1)
      (2N1) edge node {} (2N2)

      (S) edge node {} (1)
      (S) edge node {} (3)

      (1N2) edge node {} (E)
      (2N2) edge node {} (E)
      ;
    \end{tikzpicture}
    \caption{$\aG_\eqref{fig:choice:3}$}
    \label{fig:choice:3}
  \end{subfigure}
  ~
  \caption{Some examples\label{fig:choice}}
\end{figure}

\begin{figure}[t]
  \centering
  \begin{tikzpicture}[->,>=stealth',shorten >=1pt,auto,node distance=2cm,
    thick,main node/.style={
      font=\sffamily\bfseries},scale=.7,transform shape]
    \node[main node] (S0) {$\gname_7$};
    \node[main node] (S) [below left = 0.1cm and 2cm  of S0] {$\gname_3$};
    \node[main node] (1) [below left of=S] {$\aout[{\ptp[A]}][{\ptp[B]}][\gname_1][{\msg[x]}]$};
    \node[main node] (2) [below of=1] {$\ain[{\ptp[A]}][{\ptp[B]}][\gname_1][{\msg[x]}]$};
    \node[main node] (3) [below right of=S] {$\aout[{\ptp[A]}][{\ptp[B]}][\gname_2][{\msg[y]}]$};
    \node[main node] (4) [below of=3] {$\ain[{\ptp[A]}][{\ptp[B]}][\gname_2][{\msg[y]}]$};
    \node[main node] (E) [below right of=2] {$\nmerge[\gname $_3$]$};
    \node[main node] (AS) [right of=S, node distance=5cm] {$\gname_6$};
    \node[main node] (A1) [below left of=AS] {$\aout[{\ptp[A]}][{\ptp[B]}][\gname_4][{\msg[x]}]$};
    \node[main node] (A2) [below of=A1] {$\ain[{\ptp[A]}][{\ptp[B]}][\gname_4][{\msg[x]}]$};
    \node[main node] (A3) [below right of=AS] {$\aout[{\ptp[A]}][{\ptp[B]}][\gname_5][{\msg[y]}]$};
    \node[main node] (A4) [below of=A3] {$\ain[{\ptp[A]}][{\ptp[B]}][\gname_5][{\msg[y]}]$};
    \node[main node] (AE) [below right of=A2] {$\nmerge[\gname $_6$]$};
    \node[main node] (E0) [below right = 0.1cm and 2cm  of E] {$\nmerge[\gname $_7$]$};
    \node[main node] (BS) [below of=E0] {$\gname_{10}$};
    \node[main node] (B1) [below left of=BS] {$\aout[{\ptp[A]}][{\ptp[C]}][\gname_8][{\msg[z]}]$};
    \node[main node] (B2) [below of=B1] {$\ain[{\ptp[A]}][{\ptp[C]}][\gname_8][{\msg[z]}]$};
    \node[main node] (B3) [below right of=BS] {$\aout[{\ptp[A]}][{\ptp[C]}][\gname_9][{\msg[w]}]$};
    \node[main node] (B4) [below of=B3] {$\ain[{\ptp[A]}][{\ptp[C]}][\gname_9][{\msg[w]}]$};
    \node[main node] (BE) [below right of=B2] {$\nmerge[\gname $_{10}$]$};
    \path[every node/.style={font=\sffamily\small}]
    (1)    edge node {}    (2) 
    (3)    edge node {}    (4) 
    
    (S)    edge node {}    (1) 
    (S)    edge node {}    (3) 
    
    (2)    edge node {}    (E) 
    (4)    edge node {}    (E) 
    
    (A1)   edge node {}   (A2)
    (A3)   edge node {}   (A4)
    
    (AS)   edge node {}   (A1)
    (AS)   edge node {}   (A3)
    
    (A2)   edge node {}   (AE)
    (A4)   edge node {}   (AE)
    
    (B1)   edge node {}   (B2)
    (B3)   edge node {}   (B4)
    
    (BS)   edge node {}   (B1)
    (BS)   edge node {}   (B3)
    
    (B2)   edge node {}   (BE)
    (B4)   edge node {}   (BE)
    
    (S0)   edge node {}    (S) 
    (S0)   edge node {}   (AS)
    (E)    edge node {}   (E0)
    (AE)   edge node {}   (E0)
    
    (E0)   edge node {}   (BS)
    ;
  \end{tikzpicture}
  \caption{$
    \gseq[]
    [{
      \gcho[\gname_7]
      [{
        \gcho[\gname_3]
        [{
          \gint[\gname_1][{\ptp[A]}][{\msg[x]}][{\ptp[B]}]
        }]
        [{
          \gint[\gname_2][{\ptp[A]}][{\msg[y]}][{\ptp[B]}]
        }]
      }]
      [{
        \gcho[\gname_6]
        [{
          \gint[\gname_4][{\ptp[A]}][{\msg[x]}][{\ptp[B]}]
        }]
        [{
          \gint[\gname_5][{\ptp[A]}][{\msg[y]}][{\ptp[B]}]
        }]
      }]
    }]
    [{
      \gcho[\gname_{10}]
      [{
        \gint[\gname_8][{\ptp[A]}][{\msg[z]}][{\ptp[C]}]
      }]
      [{
        \gint[\gname_9][{\ptp[A]}][{\msg[w]}][{\ptp[C]}]
      }]
    }]
    $}
  \label{fig:choice:4}
\end{figure}

\section{Languages of Choreographies }\label{lang:sec}
The abstract semantics of a g-choreography is a hypergraph, which
represents the set of partial orders among the events of the
g-choreography.
A more concrete semantics can be given by considering the
\emph{language} of a g-choreography.
%
%
Informally, the language of a g-choreography $\aG \in \gset$ consists
of the sequences of words made of the communication actions of the
events in $\aG$ that preserve the causal relations of $\gsem$,
provided that $\gsem$ is defined.

Given a g-choreography $\aG$, let
$\ich = \gsem \cap (\parts \nodeset \times \parts \eset)$ be the set
of \emph{choice hyperedges} of $\aG$ (that is those hyperedges in
$\aG$ whose source represents choices) and define the outgoing
hyperedges of $\gname \in \nodeset$ in $\aG$ as
$\ichedges = \ich \cap (\{\hyedge \gname\} \times \parts \eset)$.
A map $\actch: \ich \to \parts {\eset}$ is a \emph{resolution of
  $\aG$} if $\actch(\gname) \in \ich(\gname)$
for every $\gname \in \nodeset$.
Intuitively, a resolution fixes a branch for every choice in a
g-choreography $\aG$ and therefore it induces a preorder of the events
compatible with $\aG$ and the resolution.

The preorder corresponding to a resolution is computed by
$\rOnActchClean$. This hypergraph is obtained by ($i$) removing 
every hyperedge not
chosen by the resolution and
($ii$) removing every dead event (i.e. events that are not reachable from the
initial events after removing the non-selected hyperedges):
\[
\rOnActchClean \quad = \quad (\mathit{trim}\big(
\gsem \setminus \bigcup_{\gname \in \ich}(\ich(\gname) \setminus \actch(\gname))
,\rmin[\gsem])\big)
\]
where $trim(\aR, \aE)$ is the function that removes every node in the
hypergraph $\aR$ that is not reachable from $\aE$ and
$\aR \setminus \aE = \left \{ \hopair[{\aE_1 \setminus \aE}] [{\aE_2
    \setminus \aE}] \st \hopair [\aE_1][\aE_2] \in \aR \right \}$.
\eMnote{nota che il risultato potrebbe essere eg $(\emptyset, ...)$}


\newcommand{\lalph}{\mathcal{A}}

Let $\lalph = \eset^! \cup \eset^?$.
The \emph{language} of $\aG \in \gset$ is
\[
\rlang[\aG] = \{\eact \aW \st \aW \in \lalph^{*} \mbox{ and }
\exists \mbox{ a resolution } \actch \mbox{ of } \aG \qst \psi(\aW,
\actch) \}
\]
where, $\psi(\aW, \actch)$ holds iff for all $i \neq j$ between
$1$ and the length of $\aW$ we have that
\begin{enumerate}
\item \label{lang:4} $\widx[\aW][i] \neq \widx[\aW][j]$,
  where $\widx[\aW][i]$ stands for the $i$-th symbol in $\aW$
\item \label{lang:2} $\widx[\aW][i], \widx[\aW][j] \in \rOnActchClean$
\item \label{lang:1} if
  $\widx[\aW][i] \gord[\rOnActchClean] \widx[\aW][j]$ then $i<j$
\item \label{lang:3} for every $\ae$, if
  $ \ae \gord[\rOnActchClean] \widx[\aW][i]$ then there exists $h<i$
  such that $\widx[\aW][h] = \ae$
\end{enumerate}
\cref{lang:4,lang:2} state that events in the word are not repeated
and that the word is made only of events present in the preorder,
i.e. the word cannot mix events belonging to two different branches.
\cref{lang:1} states that words preserve the causal relations of events.
\cref{lang:3} requires that all the predecessors of an event in the
word must precede the event in the word.
Notice that $\rlang[\aG]$ is prefix-closed.






\section{Projecting on Communicating Machines}\label{proj:sec}
\renewcommand{\vec}[1]{\tilde{#1}}

As in~\cite{lty15,dy12}, we adopt \emph{communicating finite state
  machines} (CFSM) as local artefacts.
We borrow the definition of CFSMs in~\cite{bz83}, with slight
adaptation to our context.
A CFSM is a finite transition system given by a tuple
$\aCM = (\aQ,\aQzero,\aTrs)$ where
\begin{itemize}
\item $\aQ$ is a finite set of {\em states} with $\aQzero \in \aQ$ the
  \emph{initial} state, and
\item $\aTrs\ \subseteq \ \aQ \times \eact{\lalph} \times \aQ$ is a set of
  \emph{transitions}; we write $q \trans{\ae} {q'}$ for
  $(q,\ae,q') \in
  \aTrs$.
\end{itemize}

A CFSM $(\aQ,\aQzero,\aTrs)$ is \emph{$\ptp$-local}  if for
every $q \trans{\ae} {q'} \in \aTrs$ holds $\subject[\ae] = \ptp$.
Given a \emph{$\ptp$-local} CFSM $\aCM_{\ptp} = (\aQ_{\ptp}, \aQzero[\ptp],
\aTrs_{\ptp})$ for each  $\ptp \in \ptpset$, the tuple $\aCS =
(\aCM_{\ptp})_{\ptp \in \ptpset}$ is a   \emph{communicating system}.

\newcommand{\osred}{\rightarrow}                
\newcommand{\TRANS}[1]{\xrightarrow{#1}}
\newcommand{\R}{\osred}
\newcommand{\msred}{\osred^\ast}     
\newcommand{\RR}{\msred}
\newcommand{\ASET}[1]{\{ {#1} \}}
\newcommand{\SGV}{N}
\newcommand{\RSN}[2]{\RS_#1(#2)}
\newcommand{\RSB}[1]{\RSN{1}{#1}}
\newcommand{\TSdelta}{\hat{\delta}}

\newcommand{\abuffer}{b}

The semantics of communicating systems is defined in terms of 
\emph{transition systems}, which keep track of
the state of each machine and the content of each buffer.
  Let $\aCS = (\aCM_{\ptp})_{\ptp \in \ptpset}$ be a \emph{communicating
    system}.
  A \emph{configuration} of $\aCS$ is a pair $\aConf = \csconf q \abuffer$ where
  $\vec q = (q_{\ptp})_{\ptp \in \ptpset}$ with $q_{\ptp} \in \aQ_{\ptp}$ and where
  $\vec{\abuffer}=(\abuffer_{\ptp \ptp[B]})_{\ptp \ptp[B] \in\chset}$ with
  $\abuffer_{\ptp \ptp[B]}\in 
  \msgset^\ast$;
  $q_{\ptp}$ keeps track of the state of the machine $\ptp$ and
  $\abuffer_{\ptp \ptp[B]}$ is the buffer that keeps track of the messages
  delivered 
  from $\ptp$ to   $\ptp[B]$. 
  The \emph{initial} configuration $\aConf_0$ is the one where $q_{\ptp}$ is the
  initial state of the corresponding CFSM and all buffers are empty.

  A configuration $\aConf'= \csconf {q'} {\abuffer'} $ is {\em reachable} from 
  another configuration $\aConf = \csconf {q} {\abuffer}$ by \emph{firing
    transition $\ae$}, written $\aConf \TRANSS{\ae} \aConf'$
  if there is $\msg \in \msgset$ such that either (1) or (2) below
  hold:
  \begin{center}
    \begin{tabular}{l@{\hspace{.5cm}}r}
      \begin{minipage}{.45\linewidth}\small
	1. 
        $\ae = \aout[@][@][][@]$ and $q_{\ptp} \trans{\ae} {q'_{\ptp}} \in \aTrs_{\ptp}$
        and 
        \begin{itemize}
        \item[a.] $q_{\ptp[C]}' = q_{\ptp[C]}$ for all ${\ptp[C]} \neq \ptp$ 
        \item[b.]  and $\abuffer_{\ptp[A]\ptp[B]}' = \abuffer_{\ptp[A]\ptp[B]}.\msg$
        \item[c.] and 
          $\abuffer_{\ptp[A']\ptp[B']}'=\abuffer_{\ptp[A']\ptp[B']}$ for all $(\ptp[A'],\ptp[B']) \neq (\ptp[A],\ptp[B])$
        \end{itemize}
      \end{minipage}
      &
        \begin{minipage}{.45\linewidth}\small
          2.
        $\ae = \ain[@][@][][@]$ and $q_{\ptp} \trans{\ae} {q'_{\ptp}} \in \aTrs_{\ptp}$
        and 
        \begin{itemize}
        \item[a.] $q_{\ptp[C]}' = q_{\ptp[C]}$ for all ${\ptp[C]} \neq \ptp[B]$ 
        \item[b.]  and $\abuffer_{\ptp[A]\ptp[B]} = \msg.\abuffer'_{\ptp[A]\ptp[B]}$
        \item[c.] and 
          $\abuffer_{\ptp[A']\ptp[B']}'=\abuffer_{\ptp[A']\ptp[B']}$ for all $(\ptp[A'],\ptp[B']) \neq (\ptp[A],\ptp[B])$
        \end{itemize}
        \end{minipage}
    \end{tabular}
  \end{center}
  Condition (1) puts $\msg$ on channel $\ptp\ptp[B]$, while (2) gets
  $\msg$ from channel $\ptp\ptp[B]$.

A configuration  $\aConf= \csconf {q} {\abuffer}$ is \emph{stable} if all buffers are
empty: $\vec \abuffer = \vec \NUL$.
A configuration  $\aConf= \csconf {q} {\abuffer}$ is a \emph{deadlock} if 
$\aConf \not \TRANSS{} $ and 
\begin{itemize}
  \item there exists a $\ptp \in \ptpset$ such that $q_{\ptp}
    \trans{\ain[@][@][][@]} {q'_{\ptp}} \in \aTrs_{\ptp}$

  \item or $\vec \abuffer \neq \vec \NUL$
\end{itemize}

The language of a communicating system $\aCS$ is the biggest prefix 
closed set $\rlang[\aCS] \in \eact{\lalph}^{\star}$
such that for each
$\ae_0 \dots \ae_{n-1} \in \rlang[\aCS]$,  
$\aConf_0 \TRANSS{\ae_0}  \dots
\TRANSS{\ae_{n-1}} \aConf_{n}$.

Given two CFSMs $\aM = (Q,q_0,\tset)$ and $\aM' = (Q',q_0',\tset')$,
write $\aM \ccup \aM'$ for the machine
$(Q \cup Q', q_0, \tset \cup \tset')$ provided that $q_0 = q_0'$;
also, $\aM \ccap \aM'$ denotes $Q \cap Q'$.
The product of $\aM$ and $\aM'$ is defined as usual as
$\aM \cprod \aM'  =  (Q \times Q', (q_0,q_0'), \tset'')$
where $\big((q_1,q'_1), \ae, (q_2,q'_2)\big) \in \tset''$ if, and only if,
\[
    \big((q_1, \ae, q_2) \in \tset \text{ and } q'_1 = q'_2\big)
    \qquad \text{or} \qquad
    \big((q'_1, \ae, q'_2) \in \tset' \text{ and } q_1 = q_2\big)
\]
We also use $min(M)$ to denote the CFSM obtained by minimising $M$
(using e.g., the classical partition refinement algorithm) when
interpreting them as finite automata.

Let $\aG$ be a g-choreography, the function $\gproj$ yields the
projection (in the form of a CFSM) of the choreography over the
participant $\ptp$ using $q_0$ and $q_e$ as initial and sink states
respectively.
The projection is defined as follow: 
\[
\geproj = 
\begin{cases}
    \tikz{
    \node at (0,0) (i) {};
    \node[shape=circle,draw,inner sep=1pt] at (0.7,0) (s) {$\cinit$};
    \node at (1.3,0) (f) {};
    \draw[->] (i) -- (s);
    \draw[->] (s) -- (f);
  }
  & \text{if } \aG = \gempty \text{ and } \cinit = \cfinal
  \\
    \tikz{
    \node at (0,0) (i) {};
    \node[shape=circle,draw,inner sep=1pt] at (0.7,0) (s) {$\cinit$};
    \node at (1.3,0) (f) {};
    \draw[->] (i) -- (s);
    \draw[->] (s) -- (f);
  }
  & \text{if } \aG = \gint[@][{\ptp[B]}][@][{\ptp[C]}]  \text{ and } \cinit = \cfinal
  \\
  \ctr{{\cinit}}{{\cfinal}}{\aout[\ptp][{\ptp[B]}][][m]} & \text{if } \aG = \gint \text{ and } \cinit \neq \cfinal
  \\
  \ctr{\cinit}{\cfinal}{\ain[{\ptp[B]}][{\ptp[A]}][][m]} & \text{if } \aG =
  \gint[@][{\ptp[B]}][@][{\ptp[A]}]
  \text{ and } \cinit \neq \cfinal
  \\
  \geproj[{\aG_1}][@][@][\cfinal'] \ccup \geproj[\aG_2][@][\cfinal'][\cfinal] &
  \text{if } \aG = \gseq[@][\aG_1][\aG_2] \text{ and }  \geproj[\aG_1][@][@][\cfinal'] \ccap \geproj[\aG_2][@][\cfinal'][\cfinal] = \{\cfinal'\}
  \\
  \geproj[{\aG_1}] \ccup \geproj[\aG_2] & \text{if } \aG = \gcho[@][\aG_1][\aG_2] \text{ and } \geproj[\aG_1] \ccap \geproj[\aG_2][@] = \{\cinit,\cfinal\}
  \\
  \geproj[{\aG_1}][@][{\overline \cinit}][{\overline \cfinal}] \times 
  \geproj[{\aG_2}][@][{\underline \cinit}][{\underline \cfinal}] &
  \text{if } \aG = \gpar[@][\aG_1][\aG_2], \ 
  \geproj[{\aG_1}][@][{\overline \cinit}][{\overline \cfinal}] \ccap
    \geproj[{\aG_2}][@][{\underline \cinit}][{\underline \cfinal}] = \emptyset,
\  \cinit = (\overline \cinit, \underline \cinit)
    \\
&
\ \ \ \ 
 \text{ and }
  \cfinal = (\overline \cfinal, \underline \cfinal)
\end{cases}
\]

\begin{figure}[ht]
  \centering
  \begin{subfigure}[b]{0.20\textwidth}
    \centering
    \begin{tikzpicture}[->,>=stealth',shorten >=1pt,auto,node distance=1.7cm,
      thick,main node/.style={
        font=\scriptsize\sffamily\bfseries}]

      \node[main node] (S) {$\overline {q_{0 \ptp}}$};
      \node[main node] (1) [below of=S] {$\overline {q_{e \ptp}}$};
      \node[main node] (2) [right of=S] {$\overline {q_{0 \ptp[B]}}$};
      \node[main node] (3) [below of=2] {$\overline {q_{e \ptp[B]}}$};

      \path[every node/.style={font=\sffamily\scriptsize}]
      (S) edge node {$\aout[@][@][][x]$} (1)
      (2) edge node {$\ain[@][@][][x]$} (3)
      ;
    \end{tikzpicture}
    \caption{$\gint[][@][x][@]$}
    \label{fig:proj:1}
  \end{subfigure}
~
  \begin{subfigure}[b]{0.20\textwidth}
    \centering
    \begin{tikzpicture}[->,>=stealth',shorten >=1pt,auto,node distance=1.7cm,
      thick,main node/.style={
        font=\scriptsize\sffamily\bfseries}]

      \node[main node] (S) {$\underline {q_{0 \ptp}}$};
      \node[main node] (1) [below of=S] {$\underline {q_{e \ptp}}$};
      \node[main node] (2) [right of=S] {$\underline {q_{0 \ptp[B]}}$};
      \node[main node] (3) [below of=2] {$\underline {q_{e \ptp[B]}}$};

      \path[every node/.style={font=\sffamily\scriptsize}]
      (S) edge node {$\aout[@][@][][y]$} (1)
      (2) edge node {$\ain[@][@][][y]$} (3)
      ;
    \end{tikzpicture}
    \caption{$\gint[][@][y][@]$}
    \label{fig:proj:2}
  \end{subfigure}
~
  \begin{subfigure}[b]{0.5\textwidth}
    \centering
    \begin{tikzpicture}[->,>=stealth',shorten >=1pt,auto,node distance=1.7cm,
      thick,main node/.style={
        font=\scriptsize\sffamily\bfseries}]

      \node[main node] (PS)  {$(\overline {q_{0 \ptp}}, \underline {q_{0 \ptp}})$};
      \node[main node] (1) [below left of = PS] {$(\overline {q_{1 \ptp}}, \underline {q_{0 \ptp}})$};
      \node[main node] (2) [below right of = PS] {$(\overline {q_{0 \ptp}}, \underline {q_{1 \ptp}})$};
      \node[main node] (E) [below right of = 1] {$(\overline {q_{1 \ptp}}, \underline {q_{1 \ptp}})$};

      \node[main node] (BPS) [right =3cm of  PS]  {$(\overline {q_{0 \ptp[B]}}, \underline {q_{0 \ptp[B]}})$};
      \node[main node] (B1)  [below left of = BPS] {$(\overline {q_{1 \ptp[B]}}, \underline {q_{0 \ptp[B]}})$};
      \node[main node] (B2)  [below right of = BPS] {$(\overline {q_{0 \ptp[B]}}, \underline {q_{1 \ptp[B]}})$};
      \node[main node] (BE)  [below right of = B1] {$(\overline {q_{1 \ptp[B]}}, \underline {q_{1 \ptp[B]}})$};

      \path[every node/.style={font=\scriptsize\sffamily}]
      (PS) edge node [xshift=-1cm,yshift=.5cm] {$\aout[@][@][][x]$} (1)
      (PS) edge node {$\aout[@][@][][y]$} (2)
      (1) edge node [xshift=-1cm,yshift=-.5cm] {$\aout[@][@][][y]$} (E)
      (2) edge node {$\aout[@][@][][x]$} (E)

      (BPS) edge node  [xshift=-1cm,yshift=.5cm] {$\ain[@][@][][x]$} (B1)
      (BPS) edge node {$\ain[@][@][][y]$} (B2)
      (B1) edge node  [xshift=-1cm,yshift=-.5cm] {$\ain[@][@][][y]$} (BE)
      (B2) edge node {$\ain[@][@][][x]$} (BE)
      ;
    \end{tikzpicture}
    \caption{$\gpar[][{\gint[][@][x][@]}][{\gint[][@][y][B]}]$}
    \label{fig:proj:3}
  \end{subfigure}
  \caption{Examples of projections}
  \label{fig:choice:bho}
\end{figure}
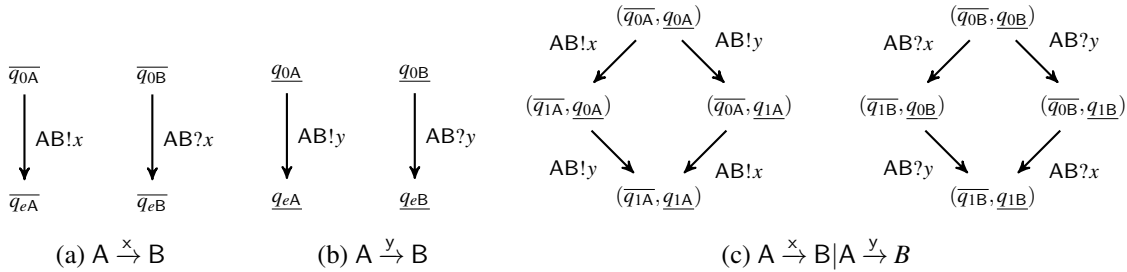

The following theorem shows that the system made of the projections of a
g-choreography $\aG$ is deadlock free if $\gsem$ is defined.
\begin{theorem}\label{thm:deadlock}
For a $\aG \in \gset$ let $\aConf_0$ be the initial state of the
communicating system 
$(min(\geproj[\aG][\ptp][q_{0\ptp}][q_{e\ptp}]))_{\ptp \in \ptpset}$. 
If $\gsem \neq \perp$ and 
$\aConf_0 \TRANSS{\ae_0} \dots \TRANSS{\ae_{n-1}}\aConf_n$ then $\aConf_n$ is not a deadlock.
\end{theorem}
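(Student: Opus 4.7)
The plan is to prove the theorem by establishing an invariant that couples any reachable configuration of the communicating system with a partial execution of $\aG$ as described by its abstract semantics. I would proceed by induction on the length $n$ of the trace $\ae_0 \dots \ae_{n-1}$, with an inner structural induction on $\aG$ to justify preservation of the invariant.

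First I would define the invariant $I(\aConf_n)$ stating that there exists a resolution $\actch$ of $\aG$ and a \emph{downward-closed} set $\aE^{done}$ of events in $\rOnActchClean$ (downward-closed w.r.t.\ the happens-before relation $\gord$) such that: (a) for each participant $\ptp$, the local state $q_\ptp$ in $\aConf_n$ is the state reached in $\gproj[\aG][\ptp]$ after firing the subsequence of $\eact{\aE^{done}}$ with subject $\ptp$; (b) each buffer $\abuffer_{\ptp\ptp[B]}$ contains precisely the messages of output events in $\aE^{done}$ whose matching input events are \emph{not} in $\aE^{done}$, in the order prescribed by the causal relations on outputs from $\ptp$ to $\ptp[B]$. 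The base case $\aE^{done} = \emptyset$ matches the initial configuration trivially. I would then show that every transition $\aConf_i \TRANSS{\ae_i} \aConf_{i+1}$ extends $\aE^{done}$ by one event whose predecessors in $\gord$ are already in $\aE^{done}$, and that such an event always exists whenever $\aE^{done} \neq \rAllEvents[\aG]$.

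From the invariant, deadlock-freeness follows easily: if $\aConf_n \not\TRANSS{}$, then no participant's projected CFSM offers a transition compatible with the current state. Given $I(\aConf_n)$, this forces $\aE^{done}$ to be maximal, hence equal to all events of $\rOnActchClean$; by (b) all buffers are empty, and by (a) every participant is at its final state, so $\aConf_n$ cannot be a deadlock. Conversely, whenever $\aE^{done}$ is not maximal, at least one minimal event in its complement is enabled in the corresponding projection, so some transition must fire.

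The main obstacle is preserving the invariant through the choice constructor, where well-branchedness is essential. For $\gcho$, the projections of $\aG$ and $\aG'$ are glued via $\ccup$ on shared initial/final states, so a participant's CFSM alone does not record which branch is active. Here I would exploit the active/passive characterisation: the (at most one) active participant commits the branch by emitting a branch-discriminating output, while each passive participant, by the reflectivity condition underlying $\rrdiv$, can determine the branch unambiguously from the first differing input it receives. This lets me show that all participants, while possibly advancing at different speeds on the shared prefix captured by reflection, necessarily agree on the selected branch thereafter, so the set $\aE^{done}$ remains consistent with a single resolution. The parallel case is handled routinely through the product construction, and the sequential case uses the covering condition $\rtrs[{\big(\rseq[{\gsem[\aG]}][{\gsem[\aG']}]\big)}] \supseteq \efst{1}{\rMAX[{\gsem}]} \times \efst{2}{\rMIN[{\gsem[\aG']}]}$ imposed by $\gsem[{\gseq[]}]$ to guarantee that each participant in $\aG'$ has a local happens-before witness ensuring it has completed its part in $\aG$ before starting $\aG'$.
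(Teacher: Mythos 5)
Your route---tying every reachable configuration to a downward-closed cut $\aE^{done}$ of $\gord$ and arguing progress from any non-maximal cut---is genuinely different from the paper's, which proceeds by structural induction on $\aG$, reflecting a deadlock of the composite system into a deadlock of a constituent system (and, for choices, building a simulation between the system for $\gcho[][{\aG_1}][{\aG_2}]$ and the system of projections of the selected branch). The cut-based argument is viable in principle, but as written it has a concrete gap at the choice case, and it is exactly the case that the paper's reflection machinery is designed to admit.

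The projection of a choice is $\geproj[{\aG_1}] \ccup \geproj[{\aG_2}]$ with the two branch automata sharing \emph{only} the states $\cinit$ and $\cfinal$. When a passive participant $\q$ has, thanks to a non-empty reflection, a common prefix of behaviour in both branches (say it first receives $\msg$ in either branch and only later receives a discriminating $\msg[x]$ or $\msg[y]$), the union automaton has two distinct, identically labelled transitions out of the fork state leading into disjoint copies of that prefix. The raw machine $\gproj[\aG][\q]$ must therefore commit to a branch \emph{before} any discriminating input arrives, can commit to the wrong copy, and then blocks. Consequently your clause (a) ("the state reached after firing the subsequence") is not well defined for this nondeterministic machine, and your progress claim ("whenever $\aE^{done}$ is not maximal, at least one minimal event in its complement is enabled in the corresponding projection") is false for the unminimised projections. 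This is precisely why the theorem is stated for $min(\geproj[\aG][\ptp][q_{0\ptp}][q_{e\ptp}])$ and why the paper's proof explicitly leans on minimisation; the example in the conclusions, projecting $\gcho[][{\gseq[][{\gint[]}][{\gint[][@][x]}]}][{\gseq[][{\gint[]}][{\gint[][@][y]}]}]$ onto $\q$, exhibits the phenomenon. Your appeal to reflectivity describes what $\q$ could in principle infer from the first differing input, not what the constructed CFSM does; to close the gap you must either run the invariant on the minimised (language-equivalent, and deterministic on the duplicated prefixes) machines, or prove separately that minimisation merges the two copies of each reflected prefix. Two smaller repairs: clause (b) orders buffer contents "by the causal relations on outputs", but outputs on the same channel coming from distinct threads of a parallel composition are causally unordered, so you need the actual emission order there; and the progress step for an input must check that the wanted message is at the \emph{head} of the FIFO buffer, which for the parallel case follows from the receiver's machine being a product (the same observation the paper uses), not from causality alone.
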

\begin{proof}[Proof sketch]
The proof of the theorem is done
by structural induction over the syntax of g-choreography.
The base cases are straightforward, since the projection of a empty choreography
or of a single interaction can not lead to a deadlock.
For the inductive steps, we rely on the fact that minimisation of CFSM preserves
the language of the communicating system and does not introduce deadlocks.  
For sequential and parallel composition, the proof is done by showing that if
there is a deadlock in the composed communicating system, then there must be a
deadlock in at least one of the constituent systems. This holds
straightforwardly for the sequential composition. 
For the parallel composition, we note that
\begin{itemize}
\item in each thread, every output of a message, say $\msg$, has a corresponding input action in a receiving machine, say $\p$;
\item the machine $M_{\p}$ of the receiver $\p$ is the product of the threads on $\p$.
\end{itemize}
Therefore, the configurations where the message $\msg$ is sent have to reach a configuration where $\p$ has the reception of $\msg$ enabled (otherwise in one of the threads there would be a deadlock). Hence, eventually $\msg$ will be consumed.

For the non-deterministic composition, we show that if there is a trace in system $\aCS$ made of machines ${(\gcho[][\aG_1][\aG_2])}$ with $\p \in \ptpset$, then there must be the same trace in one of the systems made of machines $\geproj[{\aG_1}]$ or $\geproj[{\aG_2}]$. This is due to the well-branched condition.
If participant $\ptp[B]$ selects $\geproj[{\aG_i}]$ in the communicating system $\aCS$ then all other participants are forced to follow the same choice.
This allows us to build a simulation relation between the communicating system  of the non-deterministic choice and the one consisting of the CFSM
$(\geproj[{\aG_i}])_{\p \in  \ptpset}$.
\end{proof}

The following theorem shows that the traces of the system made of the
projections of a 
g-choreography $\aG$ are included in the
language of the g-choreography if $\gsem$ is defined.
\begin{theorem}\label{thm:lang}
For a $\aG \in \gset$ let $\aCS = (min(\geproj[\aG][\ptp][q_{0\ptp}][q_{e\ptp}]))_{\ptp \in \ptpset}$. 
If $\gsem \neq \perp$ then
$\rlang[\aCS] \subseteq \rlang[\aG]$.
\end{theorem}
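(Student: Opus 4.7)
The plan is to proceed by structural induction on $\aG$, following the same case analysis used in the definition of $\gproj$ and in the proof of \cref{thm:deadlock}. Throughout, we rely on the fact that CFSM minimisation preserves the language of the communicating system, so we may reason directly on the unminimised projections.

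The base cases $\aG = \gempty$ and $\aG = \gint$ are immediate: the projection yields at most one pair of complementary send/receive transitions, and the unique (non-empty) trace $\aout \ain$ is precisely the word induced by the single hyperedge in $\gsem$. For the sequential case $\aG = \gseq[{}][\aG_1][\aG_2]$, observe that the projection on each $\ptp$ concatenates $\geproj[\aG_1]$ with $\geproj[\aG_2]$ at the shared state $\cfinal'$, so every trace $\aW \in \rlang[\aCS]$ splits as $\aW = \aW_1 \aW_2$ with $\aW_i$ a trace of the system projected from $\aG_i$. By induction, $\aW_i \in \rlang[\aG_i]$, and the soundness condition guaranteeing $\gsem \neq \rbot$ (namely, that $\rseq[{\gsem[\aG_1]}][{\gsem[\aG_2]}]$ covers the dependencies between $\rMAX[{\gsem[\aG_1]}]$ and $\rMIN[{\gsem[\aG_2]}]$) ensures that pasting resolutions and orderings of the two parts yields a valid resolution and ordering for $\gseq[{}][\aG_1][\aG_2]$, so $\aW \in \rlang[\aG]$.

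For the parallel case $\aG = \gpar[{}][\aG_1][\aG_2]$, a trace of the product system $\aCS$ is an interleaving of traces $\aW_1 \in \rlang[\aCS_1]$ and $\aW_2 \in \rlang[\aCS_2]$ (with $\aCS_i$ the system projected from $\aG_i$); since the events of $\aG_1$ and $\aG_2$ have disjoint control points and $\gsem[\gpar]$ is the plain union $\gsem[\aG_1] \cup \gsem[\aG_2]$ (plus the outer fork/join hyperedges), any interleaving of $\aW_1$ and $\aW_2$ automatically satisfies conditions \eqref{lang:4}--\eqref{lang:3} of $\psi$ with respect to the resolution obtained by juxtaposing the resolutions given by induction for each branch. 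For the choice case $\aG = \gcho[{}][\aG_1][\aG_2]$, we invoke the well-branched hypothesis exactly as in \cref{thm:deadlock}: the active participant (if any) commits to a branch with its first distinguishing send, and passive participants are forced to follow by the reflection requirement, so every trace of $\aCS$ is in fact a trace of the subsystem $(\geproj[\aG_i])_{\ptp \in \ptpset}$ for some $i \in \{1,2\}$. Picking the resolution $\actch$ that selects branch $i$ and appending the induction hypothesis $\aW \in \rlang[\aG_i]$ yields $\aW \in \rlang[\aG]$.

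The main obstacle is the choice case: we must rule out mixed traces where different participants enter different branches. This is precisely handled by exploiting the fact that $\p$-reflectivity, which underlies well-branchedness, synchronises the local views of all participants on the branch selected by the active party; concretely, one builds a bisimulation between the reachable states of $\aCS$ and those of the subsystems for $\aG_1$ and $\aG_2$, establishing that their languages cover that of $\aCS$. Combined with the additional $\gname$ and $\nmerge[][]$ hyperedges in $\gsem[\gcho]$ (which the resolution traverses silently since they involve no communication events), this gives the desired inclusion $\rlang[\aCS] \subseteq \rlang[\aG]$.
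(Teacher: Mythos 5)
Your overall plan---structural induction on $\aG$, discharging minimisation by language preservation, and resolving the choice case via well-branchedness as in \cref{thm:deadlock}---is the same as the paper's, but your sequential case contains a genuine error. From the fact that each \emph{local} machine concatenates $\geproj[{\aG_1}]$ with $\geproj[{\aG_2}]$ at the shared state you conclude that every \emph{global} trace $\aW\in\rlang[\aCS]$ factors as a concatenation $\aW=\aW_1\aW_2$. This does not follow: participants progress asynchronously, so one of them may already be performing its $\aG_2$-actions while another is still inside its $\aG_1$-part. Concretely, for $\gseq[{}][{\gint[][@][x][@]}][{\gint[][@][y][@]}]$ (whose semantics is defined, cf.\ \cref{fig:seq:4}) the communicating system admits the trace $\aout[@][@][][{\msg[x]}]\,\aout[@][@][][{\msg[y]}]\,\ain[@][@][][{\msg[x]}]\,\ain[@][@][][{\msg[y]}]$, which is not of the form $\aW_1\aW_2$ with $\aW_1$ a trace of the first subsystem and $\aW_2$ one of the second. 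The subsequent ``pasting'' of resolutions and orderings rests on this false decomposition, so the case does not go through as written.

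What is true---and what the paper's argument uses---is that a trace of $\aCS$ is a \emph{shuffle} of a trace of the $\aG_1$-system with a trace of the $\aG_2$-system in which, for each participant $\ptp$, every $\aG_1$-event with subject $\ptp$ precedes every $\aG_2$-event with subject $\ptp$; this weaker property is exactly what the per-machine concatenation gives you. One must then show that all such shuffles belong to $\rlang[{\gseq[{}][{\aG_1}][{\aG_2}]}]$, and this is precisely where the side condition making $\gsem[{\gseq[{}][{\aG_1}][{\aG_2}]}]\neq\rbot$ is needed: it guarantees that the cross-dependencies introduced by $\rseq[{\gsem[{\aG_1}]}][{\gsem[{\aG_2}]}]$ are covered by the subject-induced ones, which such shuffles respect. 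A similar caution applies to your parallel case: the product system realises only \emph{some} interleavings (FIFO buffers add dependencies, as the paper notes when explaining why the converse inclusion fails), so the claim you actually need---and should justify, e.g.\ by the reasoning of \cref{thm:deadlock}---is that every trace of the product \emph{is} some interleaving of component traces, not that all interleavings arise. Your choice case matches the paper's intended argument.
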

\begin{proof}[Proof sketch]
  The proof of the theorem is done by structural induction over the
  syntax of the \linebreak g-choreographies.  The two main tasks are to show that
  ($i$) the dependencies are preserved in the case of sequential
  composition and ($ii$) no additional communication occurs in the
  case of parallel composition.  For the sequential composition we
  proceed as follows.  By definition, every word $\aW_0$ in
  $\rlang[\gseq[]]$ is the shuffling of two words, $\aW \in
  \rlang[\aG]$ and $\aW' \in \rlang[\aG']$.  Additionally, the side
  condition of the semantics of sequential composition ensures that
  all the events of $\aW$ having subject $\ptp$ precede in $\aW_0$
  every event of $\aW'$ with subject $\ptp$.  For the second task we
  rely on the fact $\gsem$ is defined and we follow the same reasoning
  done for \cref{thm:deadlock}.
\end{proof}  
In general, the converse of the inclusion in \cref{thm:lang}, that is $\rlang[\aG] \subseteq \rlang[\aCS]$, does not hold.
The reason is due to the fact that the semantics of parallel
composition of g-choreographies does not assume a FIFO policy on
channels.
In fact, the communicating system can have less behaviours than the
interleaving of the two constituent threads because of the additional
dependencies imposed by FIFO channels.
For instance, take the g-choreography
$\aG = \gpar[][{\gint[][@][x][@]}][{\gint[][@][y][\q]}]$; the word
$\aout[@][@][][{\msg[x]}]\aout[@][@][][{\msg[y]}]\ain[@][@][][{\msg[y]}]\ain[@][@][][{\msg[x]}]$ is in
$\rlang[\aG]$ but it is not in
$\rlang[{(min(\geproj[\aG][\ptp][q_{0\ptp}][q_{e\ptp}]))_{\ptp \in
    \ptpset}}]$.

\section{Conclusions}\label{conc:sec}
We introduced an abstract semantics framework of choreographies
expressed as global graphs.
Our approach is oblivious of the underlying communication semantics
and, as discussed below, can be easily adapted to alternative
semantics.
We showed that our framework is adequate by demonstrating how it can
suitably be casted in the context of communicating machines.
Our framework seems to be more expressive than existing ones; it
allows the same participant to operate in both threads of the parallel
composition and it does not force passive participants to receive a
message signalling the selected choice as first operation in a
non-deterministic composition.
This is possible due to the well-branched condition.
Interestingly, this condition is parametric and depends on the
strategy used to find the bijection required by reflection.
This can range from using always the empty bijection (thus enforcing
the same syntactical constrains of the existing proposals) to finding
a graph isomorphism.
A projection algorithm, different from the one proposed here, can
reuse the mechanism used to check the well-branched condition to
identify the common behavior of participants and avoid using
minimization.

The independence of the global semantics from the local one is evident
from \cref{thm:lang}.
We regard as a good property of our semantics the fact that global
artefacts have \quo{more executions} than the local ones obtained from
their projections.
Intuitively, this amounts to say that projections are refinements of
the (more abstract) global view.
Another advantage is that changing local artifacts does not
necessarily require to modify the semantics of the global view.
For example, if we consider CFMSs where buffers are used as
multisets (instead of as FIFO queues), then all our constructions
apply and the relation in \cref{thm:lang} is language equality
rather than just inclusion.

Our semantic framework is amenable of variations to consider
different semantics at the global level.
For instance, an alternative semantics of global views could consider
asynchronous outputs; this can be easily formalised by removing the
causal dependency between the outputs of two sequential interactions
(i.e. the topmost dotted arrows of both \cref{fig:seq:1} and
\cref{fig:seq:4} are removed).
However, this change is sound depending on the semantics of the local
artifacts.
In fact, the projections of \cref{fig:seq:4} can led to a deadlock if
the outputs are interleaved and FIFO CFMS are used as local artifacts,
while the interleaved outputs do not cause deadlocks if multiset
buffers are used by local artifacts.
As another variant one could consider a semantics where a sender
has to wait for the receiver to consume the sent message before
proceeding; this is simply attained by adding a causal dependency
from the input of $\ptp[B]$ in \cref{fig:seq:1} to the output
from $\ptp$ to $\ptp[C]$ (while removing the dotted relation).
We conjecture that this semantics would correspond to the half-duplex
semantics of CFSMs.

A distinguishing feature of this proposal is that it fixes a
specification language of global artefacts that is not a dependent
variable of the semantics of the local views.
An interesting future direction is to explore alternative projection
algorithms.
We plan to define projections that exploit reflections.
This could be better explained by observing what happens when
projecting the simple choregraphy
$\gcho[][{\gseq[][{\gint[]}][{\gint[][@][x]}]}][{\gseq[][{\gint[]}][{\gint[][@][y]}]}]$,
say on participant $\q$ (we ignore control points because immaterial).
Our algorithm yields the following machine:
\[\begin{tikzpicture}[->,>=stealth',shorten >=1pt,auto,node distance=1.7cm, thick,main node/.style={font=\scriptsize\sffamily\bfseries}]
  \node[main node] (BPS) [right =3cm of  PS]  {$q_0$};
  \node[main node] (B1)  [below left of = BPS] {$q$};
  \node[main node] (B2)  [below right of = BPS] {$q'$};
  \node[main node] (BE)  [below right of = B1] {$q_e$};

  \path[every node/.style={font=\scriptsize\sffamily}]
  (BPS) edge node {$\ain[@][@][]$} (B2)
  (B2) edge node {$\ain[@][@][][y]$} (BE)
  ;
  \path[every node/.style={font=\scriptsize\sffamily}]
  (BPS) edge node [xshift=-1cm,yshift=.5cm]  {$\ain[@][@][]$} (B1)
  (B1) edge node [xshift=-1cm,yshift=-.5cm] {$\ain[@][@][][x]$} (BE)
  ;
\end{tikzpicture}
\qquad \text{which after minimisation becomes} \qquad
\begin{tikzpicture}[->,>=stealth',shorten >=1pt,auto,node distance=1.3cm , thick,main node/.style={font=\scriptsize\sffamily\bfseries}]
  \node[main node] (B1) [right =3cm of  PS]  {$q_0$};
  \node[main node] (B2)  [below of = B1 ] {$q'$};
  \node[main node] (BE)  [below of = B2 ] {$q_e$};

  \path[every node/.style={font=\scriptsize\sffamily}]

  (B1) edge node {$\ain[@][@][]$} (B2)
  (B2) edge[bend left] node {$\ain[@][@][][x]$} (BE)
  (B2) edge[bend right] node[anchor=center, left] {$\ain[@][@][][y]$} (BE)
  ;
\end{tikzpicture}
\]
However, exploiting the bijection of the reflection, one could
directly obtain the machine on the right (avoiding the cost of
minimising machines).
Note that other projection algorithms capable of handling
the example above (as e.g., the one in~\cite{lty15}) also require
minimisation, while projections based on types (as e.g., the ones
in~\cite{honda16jacm}) are undefined on the previous example because
they require prefixes of branches to be pairwise different.

Finally, to simplify the presentation we used loop-free global
graphs. However, all results presented here can be easily extended to
graphs with structured loops that are represented as repetitions of
g-choreography. This is possible since the semantics side-conditions
do not depend on the (possibly infinite) language of the choreography,
but rather on the hypergraps, which are finite.



\nocite{*}
\bibliographystyle{eptcs}
\bibliography{bib}

\end{document}

\end{document}

